\def\markboth#1#2{\def\leftmark{\@IEEEcompsoconly{\sffamily}\MakeUppercase{\protect#1}}%
\def\rightmark{\@IEEEcompsoconly{\sffamily}\MakeUppercase{\protect#2}}}
\newcommand\remembertext[2]{
  \immediate\write\@auxout{\unexpanded{\global\long\@namedef{mytext@#1}{#2}}}%
  #2%
}
\newcommand\recalltext[1]{%
  \ifcsname mytext@#1\endcsname
    \@nameuse{mytext@#1}%
  \else
    ``??''
  \fi
}
\newcommand{\Hb}{\mathbf{H}}
\newcommand{\I}{\mathbf{I}}
\newcommand{\D}{\mathbf{D}}
\newcommand{\h}{\mathbf{h}}
\newcommand{\x}{\mathbf{x}}
\newcommand{\uu}{\mathbf{u}}
\newcommand{\y}{\mathbf{y}}
\newcommand{\z}{\mathbf{z}}
\newcommand{\tr}{\textnormal{tr}}
\newcommand{\Ex}[2]{{\textnormal{E}_{#1}\left[#2\right]}}
\newtheorem{theorem}{Theorem}
\newtheorem{remark}{Remark}
\newtheorem{corollary}{Corollary}
\theoremstyle{definition}
\newtheorem{definition}{Definition}
\title{Capacity Scaling of Cellular Networks:\\ Impact of Bandwidth, Infrastructure Density\\ and Number of Antennas}
\author{
    Felipe G\'omez-Cuba,~\IEEEmembership{Member,~IEEE},
    Elza Erkip,~\IEEEmembership{Fellow,~IEEE},\\
    Sundeep Rangan,~\IEEEmembership{Fellow,~IEEE},
    Francisco J. Gonz\'alez-Casta\~no
    \thanks{Parts of this work have appeared in conference versions \cite{Gomez-Cuba2014isit,gomez2016capacity}.}
    \thanks{The work of F. G\'omez-Cuba was supported by FPU12/01319MINECO, Spain; and EC H2020-MSCA-IF-2015 704837. The works of S. Rangan and E. Erkip were supported by NSF grants 1116589, 1547332 and 1302336; and in part by the industry affiliates of NYU WIRELESS. The work of F.J. Gonz\'alez-Casta\~no was supported by TEC2016-76465-C2-2-R, MINECO; and GRC2014/046, Xunta de Galicia, Spain. This project has received funding from the European Union's Horizon 2020 research and innovation programme under the Marie Sk\l{}odowska-Curie grant agreement No 704837.}
    \thanks{
        F. G\'omez-Cuba was with AtlantTIC, University of Vigo, EE Telecomunicaci\'on, 36310 Vigo, Spain and is now with Dipartimento di Ingegneria dell'Informazione, University of Padova, Via Gradenigo 6/b, 35131 - Padova Italy, and with Department of Electrical Engineering, Stanford University, 350 Serra Mall, 94305 CA USA (e-mail: {\tt gmzcuba@stanford.edu}),         
        E. Erkip  and S. Rangan are with Department of Electrical and Computer Engineering, New York University Tandon
        School of Engineering, Brooklyn, NY 11201 USA
        (e-mail: {\tt \{srangan,elza\}@nyu.edu}),
        F. J. Gonz\'alez-Casta\~no is with AtlantTIC, University of Vigo,
        EE Telecomunicaci\'on, 36310 Vigo, Spain (e-mail:
        {\tt javier@gti.uvigo.es})
    }
}
\begin{document}
\maketitle
\markboth{DRAFT}{DRAFT}
\begin{abstract}
The availability of very wide spectrum in millimeter wave bands combined with large antenna arrays and ultra dense networks  raises two basic questions: What is the true value of overly abundant degrees of freedom and how can networks be designed to fully exploit them? This paper determines the capacity scaling of large cellular networks as a function of bandwidth, area, number of antennas and base station density. It is found that the network capacity has a fundamental bandwidth scaling limit, beyond which the network becomes power-limited. An infrastructure multi-hop protocol achieves the optimal network capacity scaling for all network parameters. In contrast, current protocols that use only single-hop direct transmissions can not achieve the capacity scaling in wideband regimes except in the special case when the density of base stations is taken to impractical extremes.  This finding suggests that multi-hop communication will be important to fully realize the potential of next-generation cellular networks. Dedicated relays, if sufficiently dense, can also perform this task, relieving user nodes from the battery drain of cooperation. On the other hand, more sophisticated strategies such as hierarchical cooperation, that are essential for achieving capacity scaling in \emph{ad hoc} networks, are unnecessary in the cellular context. \end{abstract}

\begin{IEEEkeywords}
Wideband regime, capacity scaling laws, cellular networks.
\end{IEEEkeywords}

\section{Introduction}
\label{sec:introduction}


To meet the tremendous growth in demand for cellular wireless data, three new design approaches are widely-considered for the evolution of next-generation systems~\cite{BocHLMP:14}:
\begin{itemize}
\item Vast spectrum available at very high frequencies, esp.\ the millimeter wave \cite{Pi2011,PietBRPC:12,Rappaport2013,RanRapEr:14,Rappaport2014-mmwbook};
\item Massive Multiple-input Multiple-output (MIMO)  for increased spatial multiplexing \cite{hoydis2013massive,larsson2014massive};
\item Ultra dense deployments of small pico- and femtocells \cite{chandrasekhar2008femtocell,andrews2012femtocells}.
\end{itemize}

Together, these technologies offer the potential of orders of magnitude increases in capacity, and, if successful, may fundamentally change the basic constraints that dictate network design today. This possibility leads to two basic questions: What is the fundamental capacity offered by these technologies and how can networks be best designed to fully leverage their potential?

From an information theoretic perspective, millimeter wave transmissions, massive MIMO and ultra-dense deployments are all, in essence,
various ways to increase the fundamental degrees of freedom of the network which are controlled by bandwidth, number of antennas and infrastructure density respectively. This paper attempts to characterize the capacity scaling of cellular networks as a function of the scaling of these dimensions. Our analysis follows along the lines of the classic result of Gupta and Kumar \cite{kumar2000capacity}, but applied to cellular networks rather than \emph{ad hoc} networks with or without infrastructure.
Specifically, we consider a large cellular network with $n$ mobile nodes, where the key parameters such as bandwidth, number of antennas, area, and base stations (BSs), all scale as functions of $n$. In addition, traffic in this network travels between each one of the BSs and the nodes in its cell, in separate uplink and downlink phases.

Our main results determine the capacity scaling by finding identically-scaling lower and upper bounds on the throughput. The upper bound is a series of cut-set bounds in which one transmitter is cut from the rest of the network, and all the nodes and BSs in the other side of the cut cooperate perfectly, forming a virtual point-to-point MIMO system where all devices contribute to receive power and all interference is perfectly canceled. The capacity scaling achieving lower bound is found by considering a simple infrastructure multi-hop (IMH) protocol where transmissions are relayed to/from the closest BSs via mobile nodes within the same cell. 

We also study the capacity scaling of two additional protocols: The first one is the infrastructure single-hop (ISH) protocol, where transmissions are sent directly between the BS and each node within its cell, and which is the dominant paradigm in current cellular networks. The second one is the infrastructure relay multi-hop (IRH), modeled after existing two-layer network architectures, where IMH is used for wireless backauling of additional access points called relay nodes (RN), while user nodes only communicate directly with a single nearest access point using ISH to prevent multi-hop implementation difficulties due to mobility, reticence to cooperation, and backwards compatibility.

Our analysis yields several important and in some cases surprising findings:
\begin{itemize}
\item \emph{Bandwidth scaling limit:}
There is a ``critical bandwidth scaling'' that defines a maximum useful bandwidth for the whole network. Below the critical point, the capacity scales with the bandwidth, whereas if bandwidth grows faster than its critical limit the capacity becomes power-limited and additional bandwidth growth no longer improves the capacity scaling. Power and bandwidth limited regimes are well-understood for point-to-point channels, and our results provide a generalization to cellular networks.

\item \emph{Benefits of increased cell density:}
The network capacity always grows with the BS density, whereas the benefits of increased bandwidth or number of BS antennas have a limit. This is valid as long as nodes are sufficiently separated to experience far-field propagation.

\item \emph{Interference alignment is not necessary:}
Our upper bound implicitly avoids inter-cell interference, whereas our lower bound IMH simply treats interference as noise. Since both have the same scaling, we can conclude that interference-alignment schemes, despite providing significant gains in a non-asymptotic regime \cite{cadambe2008interference}, do not alter the capacity scaling significantly. On the other hand, our analysis does not discard that BS cooperation, achieved for example by a wired backhaul, could improve the capacity scaling over the non-cooperative BS model either with or without interference canceling. We leave this analysis for future work.

\item \emph{Multi-hop is optimal, and outperforms single-hop communication:}
The IMH protocol achieves the optimal capacity scaling in all regimes. ISH is optimal at small bandwidth scaling but performs strictly worse than IMH in regimes with wide bandwidths or large numbers of antennas. The reason is that ISH employs longer transmission distances and becomes power-limited earlier than IMH as bandwidth scaling is increased. This suggests that, even though in today's networks capacity is bandwidth-limited and direct transmissions between the mobile nodes and the BS are efficient, in future networks with much larger bandwidths, multi-hop communication may be necessary to fully achieve the network capacity.

\item \emph{Hierarchical Cooperation is not necessary in cellular systems:}
Optimality of IMH implies that Hierarchical Cooperation (HC) cannot improve the rate scaling achieved with IMH, as opposed to dense ad-hoc networks, where multi-hop is optimal only in some regimes and HC is necessary to achieve the optimal throughput scaling otherwise \cite{ozgur2009information}. 

\item \emph{Wireless backhauling may be optimal, but RN density is critical:}
IRH performance depends on the RN density. In the best case scenario with RNs as dense as the user nodes, IRH rate scales as IMH and can be regarded as a practical strategy to achieve capacity scaling while avoiding mobility issues. But if the RN density is lower, the performance of IRH is suboptimal in the power-limited regimes with high bandwidth scaling, and may not offer any gains over ISH in the bandwidth-limited regime with low bandwidth scaling.

\item \emph{Applicability to fading and non-coherent communications:}
\remembertext{Lozclarification}{The main results in this paper are obtained under a deterministic path loss channel model with full rank and additive Gaussian noise. However, very similar network scaling laws can be readily argued for the case of the ergodic capacity in frequency selective fading channels with channel state information (CSI) at all network nodes. For the case of non-coherent fading channels --without CSI-- there are very few existing results even for capacity of point to point channels \cite{journals/tit/MedardG02,Sethuraman2009}. Our results show a behavior similar to those in \cite{journals/twc/LozanoP12,fgomezUnified}, which show that in a point-to-point non-coherent wideband channel there is a critical bandwidth occupancy, so that capacity is power limited when the bandwidth exceeds this critical value, and the critical bandwidth threshold grows with the receiver power. However, these results are only for point-to-point channels, and only qualitatively similar to the operating regimes of our network capacity scaling laws. Capacity results for multi-user non-coherent channels are limited, and the scaling exponent for the regime transitions may be different in non-coherent channels than in AWGN and coherent channels even for multiple-antenna point-to-point channels \cite{mainak2015wideband}.}
\end{itemize}

\subsection{Relation to Prior Work}

The seminal work by Gupta and Kumar \cite{kumar2000capacity} showed that the feasible rate in a dense \textit{ad-hoc} network scales as $R(n)\propto\Theta(\frac{1}{\sqrt{n}})$, where $n$ is the number of nodes\footnote{
  We use the standard $f(n)=O(g(n))$, $f(n)=\Omega(g(n))$ and $f(n)=\Theta(g(n))$ notations \cite{Knuth1976} to respectively represent that at sufficiently high $n$ function $f(n)$ becomes less than or equal than $g(n)$, greater than or equal to $g(n)$, and identical to $g(n)$ up to a constant factor.
}
. Ozgur, L\'ev\^eque and Tse introduced HC and showed that it achieves linear scaling (i.e. $R(n) = \Theta(1)$) for dense \textit{ad-hoc} networks \cite{Ozgur2007}. Franceschetti, Migliore and Minero described physical constraints which pose an ultimate limitation leading to $R(n)\leq\Theta(\frac{\log(n)^2}{\sqrt{n}})$ \cite{Franceschetti2009}. \remembertext{OzgurVSFranceschetti}{The results of \cite{Ozgur2007} and \cite{Franceschetti2009} differ in the channel model, where \cite{Ozgur2007} considers random i.i.d. phases between any pair of nodes, and \cite{Franceschetti2009} considers that as $n$ grows, the inter-node distances become smaller than the wavelength and channel phases are determined by spatial characteristics. In \cite{ozgur2009information}, Ozgur, Johary, Tse and L\'ev\^eque argue that linear scaling may still be achievable in a transitory regime where $n$ is very high but finite, such that nodes separations are larger than the carrier wavelength and channels can still be modeled by i.i.d. random phases. Otherwise, if $n$ is so high that inter-node distances are shorter than the wavelength, channel degrees of freedom scaling is spatially limited as in \cite{Franceschetti2009}. The connection between capacity scaling results with an i.i.d. random phase model and with a physical spatially-limited channel phase model is further analyzed in \cite{Ozgur2010,Lee2010,Lee2012}, which formalize the unification of \cite{Ozgur2007} and \cite{Franceschetti2009}.} In \cite{ozgur2009information} the authors also replaced the traditional separate analysis of dense and extended networks with a generalized analysis of \textit{operating regimes}, defining the user density scaling and determining its threshold for which the operating regime changes from dense-like to extended-like networks. More recently, the practicality of hierarchical cooperation to achieve linear scaling was put into question in \cite{Hong2014}. There have been extensions of scaling laws of ad-hoc networks introducing cooperation, mobility, broadcast, infrastructure or large bandwidth. See \cite{Lu2013} for a comprehensive review.

Most literature on scaling laws follows ad-hoc network models, which are not adequate representations of a cellular network, even in the case of results like \cite{Kozat2003,journals/tit/ShinJDVCLT11,wonyong2014infrastructure} that have modeled ad-hoc networks with infrastructure support. Our analysis still uses the spatial density model for infrastructure proposed in \cite{Kozat2003,journals/tit/ShinJDVCLT11,Li2011a,wonyong2014infrastructure}, but we have taken into account that data in a cellular network is \textit{required} to reach the BS and this may create bottlenecks that limit scaling \cite{Zeger2014}. \remembertext{RCShinTraffic}{In \cite{journals/tit/ShinJDVCLT11,wonyong2014infrastructure} the analysis characterizes an ``ad-hoc network with infrastructure support'', where source-destination pairs of user nodes of the same type are formed across the network, and BS infrastructure only assists these user nodes. We consider instead a conventional cellular network traffic model, where user nodes are paired with the \textit{nearest} BS, and there are typically asymmetric downlink and uplink rates with the BS as the ultimate source or destination, respectively. Note that in our multi-hop schemes nodes assist each other by forwarding information corresponding to their primary downlink/uplink exchanges with the nearest BS, however user nodes do not maintain direct traffic flows with each other using device-to-device communications underlaying the primary cellular communications, as recently proposed  \cite{BocHLMP:14,Lin2014}. Due to this cellular traffic model, our analysis requires novel cut-set bounds and achievable schemes, different than those in \cite{journals/tit/ShinJDVCLT11,wonyong2014infrastructure}.} \remembertext{RCShinAnt1}{ In addition, \cite{journals/tit/ShinJDVCLT11,wonyong2014infrastructure} considers a specific physical model for the BS antenna arrays, whereas our analysis is agnostic to the antenna model and the results are expressed instead only as a function of the \textit{effective array dimension}.}

The main innovation of our analysis method is evaluating the impact of very large bandwidths in capacity scaling. Most scaling analyses consider a constant finite bandwidth; however in such setups links only become power-limited with distance, not with bandwidth. Another approach consists on a priori letting $W\rightarrow\infty$ for each finite value of $n$, and \textit{then} let $n$ grow, as in \cite{Negi2004,Tang2008}; but this does not provide insights on the interaction between bandwidth and power-limited scaling operating regimes. In our model the goal is to find out what happens between these two extremes by letting $W$ and $n$ increase to infinity at the same time with an arbitrary relative exponent
\begin{equation}
\label{eq:defpsi}
    \psi:=\lim_{n,W\rightarrow \infty}\frac{\log W}{\log n},\; \Leftrightarrow W=\Theta(n^{\psi})
\end{equation}
where the two extremes correspond to $\psi=0$ and $\psi=\infty$. The results in \cite{ozgur2009information} also identify operating regimes depending on power scaling for the ad-hoc case, which may be interpreted implicitly as a scaling bandwidth. Introducing the bandwidth exponent explicitly allows to analyze the relative value of bandwidth scaling in relation to node and infrastructure density.

More recently, several works have studied the impact of density in cellular wireless systems with models based on stochastic geometry \cite{andrews2014selfbackhaul}. Although this permits a fine characterization of rate beyond scaling in large networks, the ability to model multi-hop protocols through stochastic geometry is more limited, and both analysis techniques are complementary. For example, in \cite{andrews2014selfbackhaul} only two hops are possible, whereas in this paper we adopt the generalized multi-hop model with arbitrary number of hops developed in the classic Gupta-Kumar model \cite{kumar2000capacity,ozgur2009information}.

\remembertext{NOlognotation}{In this paper we present scaling results characterized up to the exponents only, ignoring logarithmic variations in scaling. That is, we will not distinguish between scaling functions of the form $\Theta(n^x)$ and $\Theta(n^{x}\log(n))$. Since $n^\epsilon\geq\Theta(\log(n))$ holds for any $\epsilon\geq0$ and for sufficiently high $n$, our simplification of scaling notation does not affect the main conclusions in this paper, which mainly consider the values of rate exponents and their comparison at regime transition points. While our simplification is sufficient for the particular conclusions in this paper, we do not claim that logarithmic scaling differences are irrelevant in other applications. Researchers have committed considerable effort to study logarithmic gaps in other scenarios \cite{Franceschetti2007,Li2011a}.}

\subsection{Paper Organization}

This paper is structured as follows: Section \ref{sec:model} describes the cellular network scaling and channel model. Section \ref{sec:upperbound} obtains an upper bound to capacity. Section \ref{sec:protocols} describes the different achievable protocols. Section \ref{sec:achievabilityresults} describes capacity scaling and its relation to the throughput of each protocol. Section \ref{sec:observations} contains observations and interpretations of the results. Finally, Section \ref{sec:conclusion} concludes the paper.

\section{Network and Channel Models}
\label{sec:model}

\subsection{Network  Model}
\label{sec:network}
We consider a sequence of cellular wireless networks indexed by $n$, where $n$ is the number of single-antenna user nodes randomly and uniformly distributed in an area $A$. The network is supported by $m$ BSs, each with $\ell$ effective antennas (see below), and communication takes place over bandwidth $W$. The BSs do not have the ability to perform cooperative transmission/reception through backhaul. In the IRH protocol defined below, we also have $k>m$ single-antenna fixed RNs that communicate with the BSs through a wireless backhaul. 

Table \ref{tab:exponents} defines the scaling relation between $n$ and the different network parameters. Here $W_0$, $A_0$, $m_0$, $l_0$, $k_0$ are fixed constants. The exponents of the number of BSs and BS antennas are taken from \cite{journals/tit/ShinJDVCLT11}. The constraint $\beta+\gamma\leq1$ ensures that the number of infrastructure antennas per node does not grow without bounds. The scaling of the network area is as proposed by \cite{ozgur2009information} to model a continuum of operating regimes between \textit{dense} ($\nu=0$) and \textit{extended} ($\nu=1$) networks. We introduce the bandwidth scaling exponent $\psi$ as shown in \eqref{eq:defpsi}.
We also introduce the scaling exponent $\rho\geq\beta$ of the number of RNs for the IRH protocol defined below.

\begin{table}[!b]
 \centering
 \captionof{table}{Scaling Exponents of Network Parameters}
 \label{tab:exponents}
 \begin{tabular}{cc|p{4cm}}
  Exponent & Range & Parameter (vs. number of nodes $n$)\\ \hline
  $\psi$ & $[0,\infty)$ & Bandwidth $W=W_0n^\psi$\\
  $\nu$ & $[0,1]$ & Area $A=A_0n^\nu$\\
  $\beta$ & $[0,1]$ & No. of BSs $m=m_0n^\beta$\\
  $\gamma$ & $[0,1-\beta]$ & No. of BS antenna array effective dimensions $\ell=\ell_0 n^\gamma$\\
  $\rho$ & $[\beta,1]$ & No. of RNs $k=k_0 n^\rho$\\
 \end{tabular} 
 \end{table}
 
Note that $\ell$ is the number of \textit{effective} antenna dimensions, that is the maximum number of independent spatial dimensions over which a BS can communicate. In a rich scattering environments $\ell$ is equal to the number of physical antennas, whereas if scattering is sparse such that some physical antennas are correlated, $\ell$ represents the number of independent propagation paths that the array can exploit. By focusing on a given number of effective dimensions, our analysis can be applied, with appropriate values of $\ell$, to many antenna array architectures and even sparse propagation models in the literature, such as \cite{journals/tit/ShinJDVCLT11,Desgroseilliers2013,5673745,Samimi2014}. Hereafter, we use the term ``number of antennas'' to simply refer to the \textit{effective array dimensions} $\ell$ and represent by $\ell_\mathrm{t}$ and $\ell_\mathrm{r}$ the effective number of transmit and receive array dimensions.

\remembertext{RCShinAnt2}{Note that in \cite{journals/tit/ShinJDVCLT11,wonyong2014infrastructure} it is assumed that $\tilde{\ell}=n^{\tilde{\gamma}}$ physical antennas are uniformly and randomly located in an area $\Theta(n^{\nu-\beta})$. Certain physical characteristics of the model in \cite{journals/tit/ShinJDVCLT11,wonyong2014infrastructure} lead to a constraint in the number of exploited transmit dimensions that scales with the \textit{perimeter} of the array, $\Theta(n^{\frac{\nu-\beta}{2}})$. Therefore the effective number of transmit dimensions in \cite{journals/tit/ShinJDVCLT11,wonyong2014infrastructure} would be  $\gamma=\min(\tilde{\gamma},\frac{\nu-\beta}{2})$. In addition,  \cite{journals/tit/ShinJDVCLT11,wonyong2014infrastructure} imposes the equality requirement $\beta+\tilde{\gamma}=1$, which we have relaxed as $\beta+\gamma\leq1$ to account for any physical array mode with $\gamma\leq\tilde{\gamma}$, including but not limited to the model in \cite{journals/tit/ShinJDVCLT11,wonyong2014infrastructure}.}

We consider BSs that are placed at fixed distances of each other, dividing the network into regular hexagonal cells around each BS with radius $r_{\mathrm{cell}}$ and with asymptotically (as $n\to\infty$) $\frac{n}{m}$ nodes each, as in Fig. \ref{fig:model}. The RNs, when present, are uniformly distributed over cells and placed in a hexagonal layout within each cell. The downlink from the BS to the nodes and the uplink from the nodes to the BS operate independently in alternate time division duplex (TDD) frames. This imposes a $\frac{1}{2}$ penalty in rate but otherwise does not alter the scaling of capacity with $n$. BSs cannot receive in the downlink phase or transmit in uplink, while nodes can do both. 

\begin{figure}[!t]
 \centering
 \includegraphics[width=.9\columnwidth]{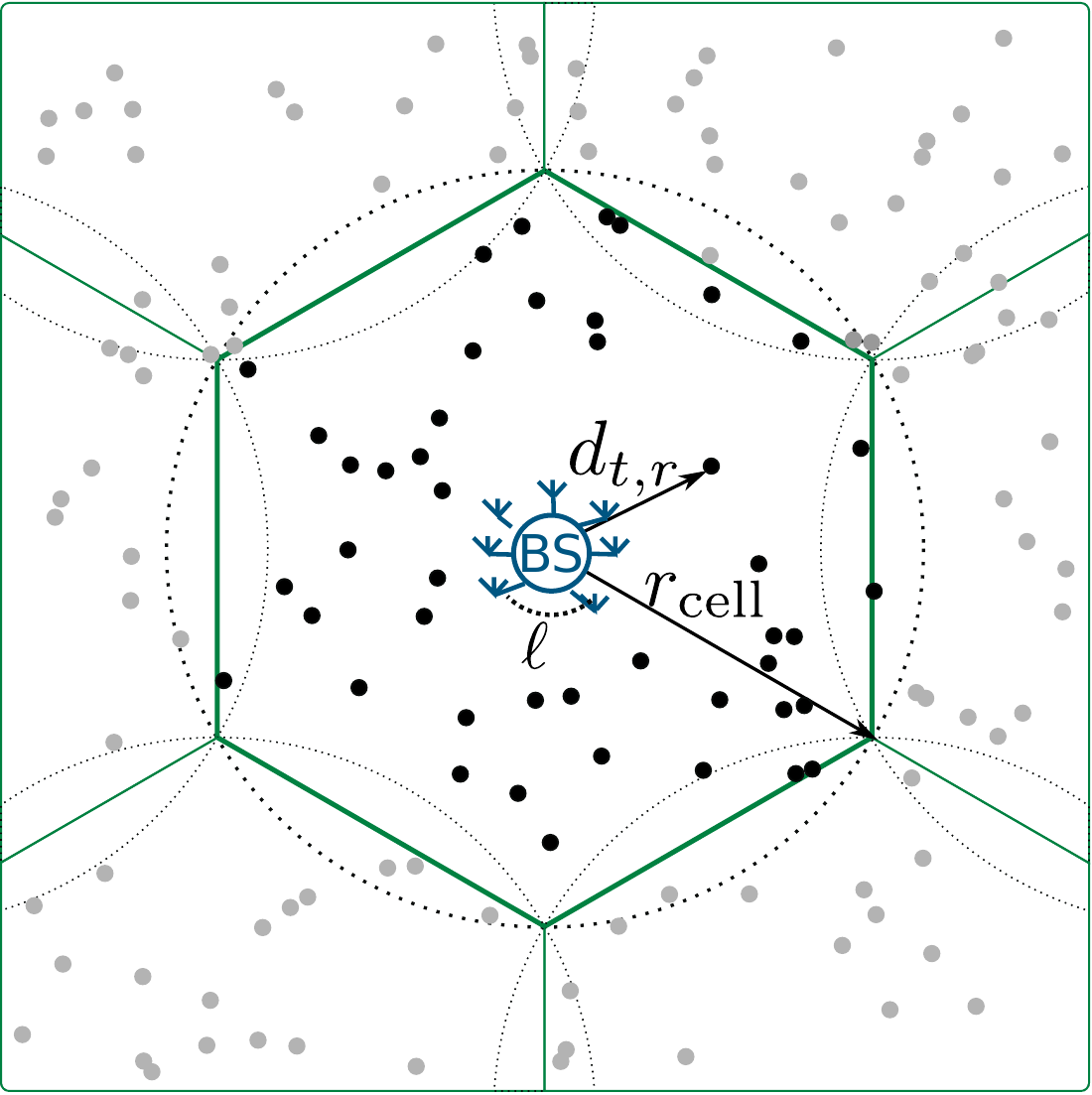}
 \captionof{figure}{One cell of the cellular network.}
 \label{fig:model}
\end{figure}

Due to random node placement, the rate achievable by any individual user is a random variable depending on its location and the protocol used. The following definitions are adapted from \cite{kumar2000capacity}.
\begin{definition}
 A downlink (uplink) rate of $R_{\mathrm{DL}}^{x}(n)$ ($R_{\mathrm{UL}}^{x}(n)$) bits per second per node is \textit{achieved using protocol $x$} in a realization of the cellular network if the protocol can guarantee that all nodes can receive from (transmit to) its assigned BS at least $R_{\mathrm{DL}}^{x}(n)$ ($R_{\mathrm{UL}}^{x}(n)$) bits per second.
\end{definition}

\remembertext{TDL}{Note that if we denote by $T_{\mathrm{DL}}^{x}(n)$ ($T_{\mathrm{UL}}^{x}(n)$) the sum DL (UL) throughput per BS with protocol $x$, our definition of achievable rate requires that $R_{\mathrm{DL}}^{x}(n)\leq \frac{m}{n}T_{\mathrm{DL}}^{x}(n)$.}

\begin{definition}
 A downlink (uplink) rate of $R_{\mathrm{DL}}(n)$ ($R_{\mathrm{UL}}(n)$) bits per second per node is \textit{feasible} in a realization of the cellular network if there exists a protocol that achieves it. In other words $R_{\mathrm{DL}}(n)={\displaystyle \sup_x} R_{\mathrm{DL}}^{x}(n)$ and $R_{\mathrm{UL}}(n)={\displaystyle \sup_x} R_{\mathrm{UL}}^{x}(n)$.
\end{definition}

The definitions above result in random rates depending on the realization of node locations. The definition below is for the largest rate scaling that holds asymptotically with probability $1$.

\begin{definition}
The downlink (uplink) per node \textit{throughput capacity scaling} $C_{\mathrm{DL}}(n)$ ($C_{\mathrm{UL}}(n)$) of random cellular network is of the order $\Theta(f(n))$ if there are constants $c_1<c_2$ such that
 \begin{equation}
 \label{eq:FeasibleP1}
 \lim_{n\rightarrow\infty}P\left(R_{\mathrm{DL}}(n)=c_1f(n)\right)=1
 \end{equation}
 \begin{equation}
 \label{eq:Unfeasible}
 \lim_{n\rightarrow\infty}P\left(R_{\mathrm{DL}}(n)=c_2f(n)\right)<1
 \end{equation}
\end{definition}

We can also define the \textit{achievable rate scaling of protocol $x$} if in the above definition we replace the feasible rates with achievable rates using protocol $x$.

In this paper we find an upper bound and lower bounds to throughput capacity scaling by studying achievable rate scaling of different protocols. When the two have the same exponent they give the capacity scaling.

%

\subsection{Channel Model}
\label{sec:channelmodel}

The discrete time received signal observed at a receiver $r$ which can be either a node, BS or RN, is given by 
 \begin{equation}
 \label{eq:tchannel}
  \y_r=d_{t,r}^{-\frac{\alpha}{2}}\Hb_{t,r}\x_t+\sum_{i\in\mathcal{I}}d_{i,r}^{-\frac{\alpha}{2}}\Hb_{i,r}\x_i+\z_r
 \end{equation}
where $\x_t$ is the signal of  the intended transmitter and the set $I$ refers to interfering transmitters active at the same time and over the same frequency band. Furthermore, $d_{j,r}$, $j=t$ or $j \in I$, is the distance between transmitter $j$ and receiver $r$, $\alpha$ is the path loss exponent and $\z_r\sim \mathcal{CN}(0,N_0\I_{\ell_\mathrm{r}})$ is the additive white Gaussian noise. 
 
 The channel gain matrix $\Hb\in\mathbb{C}^{\ell_\mathrm{t}\times \ell_\mathrm{r}}$ is assumed to be full rank. The full rank assumption is justified by our interpretation of $\ell_t$ and $\ell_r$ as effective antenna dimensions. Each coefficient of the channel matrix has unit gain and an arbitrary phase, $h_{t,r}^{(i,j)}=e^{j \theta_{i,j}}$, and that the channel squared norm satisfies $|\Hb|^2=\ell_{\mathrm{r}}\ell_{\mathrm{t}}$. The channel model in \eqref{eq:tchannel} is applicable to one symbol transmission with period $T_\mathrm{s}=1/W$ over a frequency-flat channel with power constraint $\Ex{}{|\x_t|^2}\leq \frac{P_t}{W}$ where $P_t$ depends on the type of the transmitter and the fraction of power it dedicates towards $r$. Average transmission power constraints of nodes, BSs and RNs are  $P$, $P_{\mathrm{BS}}$ and $P_{\mathrm{RN}}$, respectively.

\section{Upper Bound to Capacity Scaling}
\label{sec:upperbound}

In order to obtain the upper bound in Theorem 1 below, we develop a series of $m$ cut-set bounds to the downlink sum-rate of the users in each cell, by using the cut that separates the BS -as transmitter- from the rest of the network. Similarly, we develop $n$ cut-set bounds to uplink rate of each user by separating that particular user -as transmitter- from the rest of the network.

\begin{theorem}
\label{the:UB}
The downlink throughput capacity scaling of a cellular network is upper bounded 
\begin{equation}
 C_\mathrm{DL}(n)\leq\Theta\left(n^{\beta+\gamma-1+\min\left(\psi,(1-\nu)\frac{\alpha}{2}\right)}\right)
\end{equation}
and the uplink throughput capacity scaling is upper bounded by
\begin{equation}
 C_\mathrm{UL}(n\leq\Theta\left(n^{\min\left(\psi,(1-\nu)\frac{\alpha}{2}\right)}\right)
\end{equation}
\end{theorem}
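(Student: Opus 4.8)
The plan is to realize each of the $m+n$ cut-set bounds announced before the statement as a point-to-point MIMO capacity and then read off the bandwidth- and power-limited regimes from the resulting expression. For the downlink I would fix one cell, cut its BS (the sole transmitter on the near side) from everything else, and hand a genie full cooperation to all remaining nodes and BSs while cancelling all inter-cell interference; the sum rate $T_{\mathrm{DL}}$ of the $n/m$ users in that cell is then upper bounded by the mutual information across the cut, i.e. by the capacity of the virtual channel $\y=\mathbf{G}\x+\z$ from the $\ell$ BS antennas to the stacked cooperating receive antennas, where the rows of $\mathbf{G}$ are $d_j^{-\alpha/2}\h_j^H$. Since any feasible transmit covariance $\Q$ with $\tr\Q\le P_{\mathrm{BS}}/W$ satisfies $\Q\preceq(P_{\mathrm{BS}}/W)\I$, we have $\mathbf{G}\Q\mathbf{G}^H\preceq(P_{\mathrm{BS}}/W)\mathbf{G}\mathbf{G}^H$, so the cut capacity is at most $W\sum_{i=1}^{\ell}\log(1+\tfrac{P_{\mathrm{BS}}}{N_0W}\lambda_i)$ with $\lambda_i$ the (at most $\ell$) eigenvalues of $\mathbf{G}^H\mathbf{G}$.

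I would then bound this in two complementary ways and keep the smaller. The rank bound $\log\det(\I+M)\le\rank(M)\log(1+\tr M)$ gives $T_{\mathrm{DL}}\le\ell W\log(1+\tfrac{P_{\mathrm{BS}}}{N_0W}\tr(\mathbf{G}^H\mathbf{G}))$ and controls the bandwidth-limited regime, while $\log\det(\I+M)\le\tr M$ gives $T_{\mathrm{DL}}\le\tfrac{P_{\mathrm{BS}}}{N_0}\tr(\mathbf{G}^H\mathbf{G})=\tfrac{P_{\mathrm{BS}}\ell}{N_0}\sum_j d_j^{-\alpha}$ and controls the power-limited regime. Everything now hinges on the geometric sum $S:=\sum_j d_j^{-\alpha}$. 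The key estimate is that, for $\alpha>2$, $S$ is dominated by the nearest nodes at the typical inter-node scale $d_{\min}=\Theta(n^{(\nu-1)/2})$ with density $\Theta(n^{1-\nu})$, so that $S\approx n^{1-\nu}d_{\min}^{2-\alpha}=\Theta(n^{(1-\nu)\alpha/2})$. Substituting $W=\Theta(n^\psi)$ and $\ell=\Theta(n^\gamma)$, the logarithm's argument scales as $n^{(1-\nu)\alpha/2-\psi}$: when $\psi<(1-\nu)\tfrac\alpha2$ it diverges and the rank bound gives $T_{\mathrm{DL}}=\Theta(\ell W)=\Theta(n^{\gamma+\psi})$ up to a $\log n$ factor, whereas when $\psi>(1-\nu)\tfrac\alpha2$ the argument vanishes, $\log(1+x)\approx x$, and the power bound gives $T_{\mathrm{DL}}=\Theta(\ell S)=\Theta(n^{\gamma+(1-\nu)\alpha/2})$. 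This yields $T_{\mathrm{DL}}=\Theta(n^{\gamma+\min(\psi,(1-\nu)\alpha/2)})$, and dividing by the $n/m$ users per cell via $R_{\mathrm{DL}}\le\tfrac{m}{n}T_{\mathrm{DL}}$ (the relation recorded in the remark after the first definition) produces the claimed exponent $\beta+\gamma-1+\min(\psi,(1-\nu)\tfrac\alpha2)$.

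The uplink is the same argument with the roles reversed: cut a single user (one transmit antenna, power $P$) from the cooperating remainder, obtaining a SIMO channel with a single spatial degree of freedom, so $R_{\mathrm{UL}}^{(u)}\le W\log(1+\tfrac{P}{N_0W}\sum_r d_{u,r}^{-\alpha})$. The receive sum is again $\Theta(n^{(1-\nu)\alpha/2})$ for a typical user, and I would check that the nearby single-antenna nodes dominate the $\ell$-antenna BS contribution, which holds because $\gamma\le 1-\beta$ and $\alpha>2$ force $\ell\,r_{\mathrm{cell}}^{-\alpha}=n^{\gamma-(\nu-\beta)\alpha/2}\lesssim n^{(1-\nu)\alpha/2}$. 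The same two-regime reading of the logarithm then gives $R_{\mathrm{UL}}\le\Theta(n^{\min(\psi,(1-\nu)\alpha/2)})$, with no $\gamma$ term because a single-antenna user carries only one stream.

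The part I expect to be most delicate is the high-probability geometric estimate $S=\Theta(n^{(1-\nu)\alpha/2})$, because $\sum_j d_j^{-\alpha}$ has a heavy upper tail: a single anomalously close node can inflate it, and for $\alpha\ge 2$ its expectation even diverges without a distance cutoff, so a naive union bound over all $m$ cells or $n$ users introduces a spurious polynomial factor $n^{\alpha\beta/2}$. I would sidestep this by observing that the capacity upper bound needs only \emph{one} cut to be small (a typical cell for the downlink, a typical user for the uplink), since $R_{\mathrm{DL}}\le\tfrac{m}{n}\min_i T_{\mathrm{DL}}^{(i)}$ and $R_{\mathrm{UL}}\le R_{\mathrm{UL}}^{(u)}$ for any fixed $u$; for a single fixed cut the nearest-neighbour distance concentrates at $\Theta(n^{(\nu-1)/2})$ up to a $\log n$ factor, which is absorbed by the convention of ignoring logarithmic terms. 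Equivalently one separates the few strong eigenmodes coming from close nodes, which stay inside the logarithm and therefore cost only $\log n$, from the concentrated bulk; this is exactly where the $\alpha>2$ (near-field-dominated) assumption and the careful counting of nodes in distance rings enter.
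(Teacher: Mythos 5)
Your proposal is correct and follows essentially the same route as the paper: per-BS and per-user cut-set bounds with full receiver cooperation and cancellation of the other BSs, reduction to a point-to-point MIMO capacity bounded through $\tr(\D_t\Hb_t\Hb_t^H\D_t^H)=\Theta(\ell\sum_r d_{t,r}^{-\alpha})$, the ring-decomposition estimate $\sum_r d_{t,r}^{-\alpha}=\Theta(n^{(1-\nu)\alpha/2})$, the two-regime reading of $W\ell\log(1+\mathrm{SNR})$, and division by the $n/m$ users per cell. Your rank/trace bounds on $\log\det$ in place of the paper's explicit singular-value and equal-power-allocation step, and your explicit check that the BS antennas do not dominate the uplink receive sum, are only presentational refinements of the same argument, and your observation that a single typical cut suffices (avoiding a union bound over all cells) is exactly how the paper's exponential-stripping estimate is deployed.
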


\begin{proof}

We introduce the detailed analysis for downlink. Uplink follows similarly. We first consider the case of no RNs ($k=0$)  and then argue the same bound holds for any $k \leq n$.

We upper bound the sum-rate of the users served by each BS by considering a cut separating that BS from the rest of the network. Each of these $m$ cuts upper bounds the sum downlink rate received by the approximately $n/m$ destination users in one cell. At the receiving side of the cut there is perfect cooperation among $n$ receiver nodes and the remaining $m-1$ BS transmitters whose transmissions are known to the receivers and can be perfectly canceled. Hence, each cut behaves as a single MIMO channel with array dimensions $\ell_\mathrm{t}=\ell$ and $\ell_\mathrm{r}=n$.

We represent the distance from each node $r$ to BS $b$ in a diagonal matrix 
$$
\D_b\triangleq\left(\begin{array}{ccc}
           d_{b,1}^{-\frac{\alpha}{2}}&\dots&0\\
           \vdots&\ddots&\vdots\\           
           0&\dots& d_{b,n}^{-\frac{\alpha}{2}}\\
          \end{array}\right), 
          $$
          and modify the channel expression \eqref{eq:tchannel} to write the signals from all BSs to all nodes in the form
 \begin{equation}
 \label{eq:DLBoundchannel}
  \y=\D_t\Hb_t\x_t+\underset{\textnormal{known to all receivers}}{\underbrace{\sum_{t'\neq t}\D_{t'}\Hb_{t'}\x_{t'}}}+\z,
 \end{equation}
 and where $\Hb_t$ represents the channel matrix between BS $t$ and all receivers.


Using the assumption that channel matrices are full rank and $\ell\leq n/m\leq n$, following standard arguments in \cite[(7.10)]{tse2005book}, it can be shown that $T_{\mathrm{DL}}^{t}(n)$, the DL sum rate on the cell of BS $t$, can be upper bounded as
\begin{equation}
 \label{eq:rate3t}
T_{\mathrm{DL}}^{t}(n)\leq \max_{\sum_{i=1}^{\ell} P_i\leq P_{BS}} W\sum_{i=1}^{\ell}\log\left(1+\frac{P_i\lambda_i^2}{WN_0}\right)\\
\end{equation}
where $\lambda_i$ for $i\in[1,\ell]$ are the nonzero, nonnegative singular values of the matrix $\D_t\Hb_t$.

We know that 
$$\lambda_i^2\leq \sum_{i=1}^{\ell}\lambda_i^2= \tr\{\D_t\Hb_t\Hb_t^H\D_t^H\}\leq\ell\sum_{r=1}^{n}d_{t,r}^{-\alpha}.$$ 
Concavity of the logarithm suggests that $P_i^*=\frac{P_{\mathrm{BS}}}{\ell}$ maximizes this upper bound. Hence

\begin{equation}
 \label{eq:rate4t}
  T_{\mathrm{DL}}^{t}(n)\leq W\ell\log\left(1+\frac{P_{\mathrm{BS}}}{WN_0}\sum_{r=1}^{n}d_{t,r}^{-\alpha}\right)\\
\end{equation}

Notice that if $\lim_{n\to\infty}\frac{P_{\mathrm{BS}}\sum_{r=1}^{n}d_{t,r}^{-\alpha}}{WN_0}=\infty$, then the upper bound in \eqref{eq:rate4t} becomes degrees-of-freedom-limited and scales as $\Theta(W\ell)$. Conversely, if $\lim_{n\to\infty}\frac{P_{\mathrm{BS}}\sum_{r=1}^{n}d_{t,r}^{-\alpha}}{WN_0}=0$, the upper bound is power-limited and scales as $\Theta(\ell P_{\mathrm{BS}}\sum_{r=1}^{n}d_{t,r}^{-\alpha})$.

The sum $\sum_{r=1}^{n}d_{t,r}^{-\alpha}$ can be calculated using the exponential stripping method described in \cite{4294156}. Consider a series of concentric rings centered at the BS $t$ with inner radius $r_i=n^{\frac{\nu}{2}}e^{\frac{-i}{2}}$ and outer radius $r_{i-1}$. Recall that the user density scales as $n^{1-\nu}$ and network area as $n^\nu$, thus the number of nodes contained in each disc is $S_i\leq ne^{1-i}$ with high probability. Using this, we can upper bound the sum over $n$ by summing over the ring. Moreover, the smallest radius that contains one node w.h.p. is $r_{s}=\Theta(n^{1-\nu})$ so the sum ends at $i\leq \lfloor \log n\rfloor+1$. For all the outer rings $i\in[1,\lfloor \log n\rfloor]$ we can lower bound distance to the BS by the inner radius $d_{t,i}^{-\alpha}\leq r_{i-1}$. In addition the innermost disk indexed by $i= \lfloor \log n\rfloor+1$ contains one uniformly-distributed  node location, and its distance from the BS scales with $d_{t,i}^{-\alpha}=\theta(r_{s})=\Theta(n^{1-\nu})$ with high probability.
\begin{equation}\label{eq:sumexpstript}
\begin{split}
\sum_{r=1}^{n}d_{t,r}^{-\alpha}
&\leq \sum_{i=1}^{\lfloor \log n\rfloor+1}S_i r_i^{-\alpha}\\
&\leq \left[\sum_{i=1}^{\lfloor \log n\rfloor} ne^{1-i}n^{-\nu\frac{\alpha}{2}}e^{+i\frac{\alpha}{2}}\right]+en^{(1-\nu)\frac{\alpha}{2}}\\
&\leq n^{-\nu\frac{\alpha}{2}}\left[\log n e^{1+\frac{\alpha}{2}\log (n)}\right]+n^{(1-\nu)\frac{\alpha}{2}}e\\
&\leq (\log n+1)n^{(1-\nu)\frac{\alpha}{2}}e\\
\end{split}
\end{equation}
where the third inequality is due to $e^{+i\frac{\alpha}{2}}\leq e^{\max(i)\frac{\alpha}{2}}$.

Examining Table \ref{tab:exponents}, this leads to $\Theta\left(n^{\gamma+\min(\psi,(1-\nu)\frac{\alpha}{2})}\right)$. Now, by symmetry of the upper bound over all BSs, and by the definition of feasible rate as guaranteed to all users, the throughput capacity of the network is upper bounded by $C_{\mathrm{DL}}(n)\leq \frac{m}{n}\min_{t}T_{\mathrm{DL}}^{t}(n)=\Theta(n^{\beta+\gamma-1+\min(\psi,(1-\nu)\frac{\alpha}{2})})$
completing the proof of Theorem \ref{the:UB} for DL.

Note that this scaling upper bound makes intuitive sense because, with probability $1$ as $n\to\infty$, a disc with radius $\Theta(n^{\frac{\nu-1}{2}})$ around a BS contains one receiver, which combined with array gain $n^\gamma$ gives the best-case transfer of power between a single BS and the rest of the network. Also, the degrees of freedom of the cellular network cannot exceed $\Theta\left(Wm\ell\right)$.

A similar set of arguments lead to the bound for the uplink. In this case we consider $n$ cuts, each separating one user node from the rest of the network. In this cut, all the BSs and the remaining $n-1$ nodes are on the receiving side of the cut, and their mutual interference is canceled. Due to the fact that the transmitting node has a single antenna (eigenvalue), the degrees of freedom are $\Theta(W)$. The exponential stripping sum (equivalent of \eqref{eq:sumexpstript}) in this case needs to be evaluated over $\sum_{r=1}^{n-1}d_{t,r}^{-\alpha}+\ell\sum_{r=1}^{m}d_{t,r}^{-\alpha}=\Theta(n^{(1-\nu)\frac{\alpha}{2}})$ leading to an upper bound on uplink feasible rate as $\min_t T_{\mathrm{UL}}^{t}(n)=\Theta\left(n^{\min(\psi,(1-\nu)\frac{\alpha}{2})}\right)$

Note that the above scaling laws also apply to the downlink/uplink throughput capacity scaling of a network with $k<n$ RNs. This can be shown evaluating the cut-set bound on an equivalent network with $2n\geq n+k$ user nodes and multiplying the resulting rate per node by $2$ which is always greater than $\frac{n+k}{n}$. The capacity scaling exponent does not change when the number of nodes is multiplied by a constant.
\end{proof}

\section{Protocol Models}
\label{sec:protocols}

\subsection{Infrastructure Single-Hop (ISH)}
\label{sec:protocolsISH}
In the ISH protocol, BSs transmit directly to all nodes in downlink and all nodes transmit directly to the BSs in uplink. The BSs do not cooperate in transmission or reception and the interference signals between different cells are treated as noise. There are $\frac{n}{m}$ nodes uniformly distributed within each cell. The $\ell$ BS antennas are used for Multi-User MIMO (MU-MIMO), implementing a spatial multiplexing scheme to groups of $\ell$ users that allows each BS to transmit or receive $\ell$ signals per bandwidth resource at the same time.

Each BS divides the nodes in its cell in $\frac{n}{m\ell}$ groups of $\ell$ users, and assigns to each group a subchannel with bandwidth $W\ell\frac{m}{n}$. Subchannels are separated using Frequency Division Multiplexing (FDM) in DL and Frequency Division Multiple Access (FDMA) in UL. Within the subchannel of each group, $\ell$ simultaneously transmitted signals coexist using MU-MIMO spatial multiplexing as described in \cite[Ch. 10]{tse2005book}. The dimensions of the channel matrices always allow this because $\gamma<1-\beta$, so there are always more nodes than BS antennas if $n$ is sufficiently large. Also, the multi-user channel matrix, obtained by putting together all point-to-point channel matrices of nodes in the same subchannel and cell, is full rank if nodes are separated at least a quarter of a wavelength and far-field propagation holds. In downlink, the BS transmits independent signals to each destination with equal power allocation $P_{\mathrm{BS}}\frac{m}{n}$.

Note that while our equal division of power and bandwidth may be suboptimal we show, as a part of the analysis of ISH, that in a scaling law sense this power allocation suffices to get the best scaling possible with any single-hop protocol.
  
\subsection{Infrastructure Multi-Hop (IMH)}
In the IMH protocol, each cell is subdivided regularly into smaller regions of area $A_{\mathrm{r}}$ called \textit{routing subcells}, and information is forwarded to/from the BS via multi-hop communication using a node in each routing subcell as a relay as shown in Fig \ref{fig:modelIMH}. For multi-hopping, the routing subcells must have at least one node with high probability, which results in $A_{\mathrm{r}}>\frac{A}{m}\frac{2\log(\frac{n}{m})}{\frac{n}{m}}$ \cite{journals/tit/ShinJDVCLT11}. Routes are defined as successions of transmissions between adjacent subcells, where each hop covers a distance no longer than four subcell radii, $4r_{\mathrm{subcell}}\propto\sqrt{A_{\mathrm{r}}}$, given the largest distance between any two points in adjacent subcells. Sub-cells alternate in becoming active using a non-scaling (i.e. constant) time division scheduling by a constant factor to avoid collisions (transmissions to the same destination subcell) and satisfy the half-duplex constraint. For example in a hexagonal tessellation a $1/7$ constant can prevent collisions as illustrated in Fig. \ref{fig:routingTDMA}.

\begin{figure}[!t]
\centering
 \includegraphics[width=0.9\columnwidth]{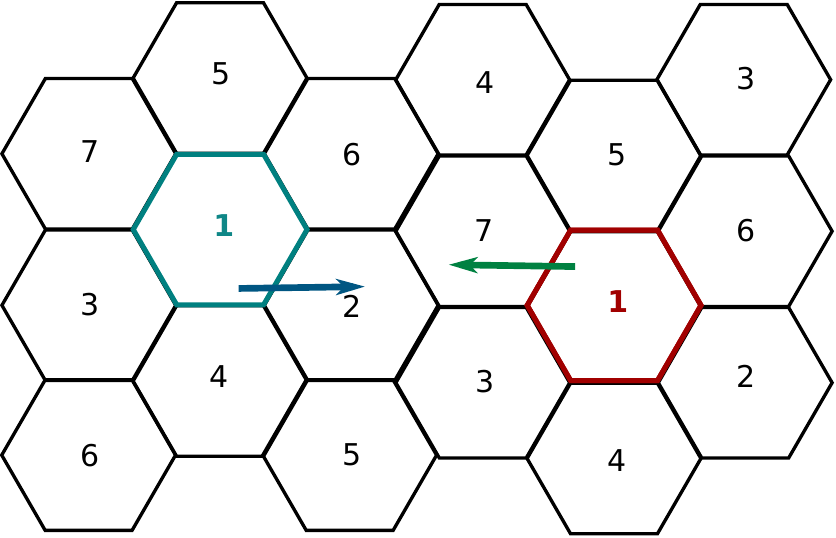}
  \caption{Time division for multi-hop routing without collisions in a hexagonal lattice. A factor of $1/7$ guarantees that two active subcells are always separated by at least two inactive subcells so that transmitters never target the same destination subcell.}
 \label{fig:routingTDMA}
\end{figure}

All downlink routes start, and all uplink routes end, at the BS in the center of the cell. We call this point the \textit{head} of all routes, where the BS communicates with its closest $\ell$ users only, using the same MU-MIMO we described for ISH with minor adaptations. Since there are $\ell$ nodes and $\ell$ BS antennas, a single channel with bandwidth $W$ without FDMA/FDM is employed, with MU-MIMO spatial multiplexing in exactly $\ell$ spatial dimensions. The channel and rate models for these links are the same as in ISH, with the new bandwidth allocation and a reduced maximum distance between the BS and the destinations scaling as $4r_{\mathrm{subcell}}=\Theta(n^{\frac{\nu-1}{2}})$.

The BS serves as head for a total of $\frac{n}{m}$ routes (one per cell user), but only $\ell$ can be spatially multiplexed at the same time, so the routes are time-multiplexed in a round robin fashion in the links between the BS and its neighbors, with each route being served a $\frac{m\ell}{n}$ portion of the time. For the remaining hops on each route, a single node in each routing subcell forwards its received data of a single path to a single node in the next routing subcell in the path. Since each node has a single antenna, there is no MIMO and all the bandwidth and node power are exploited. Again, inter-node distances scale at most as $4r_{\mathrm{subcell}}=\Theta(n^{\frac{\nu-1}{2}})$.

\begin{figure}[!t]
\centering
 \subfigure[IMH: Multi-hop is performed by one user node selected in each routing subcell.]{
 \centering
 \includegraphics[width=0.75\columnwidth]{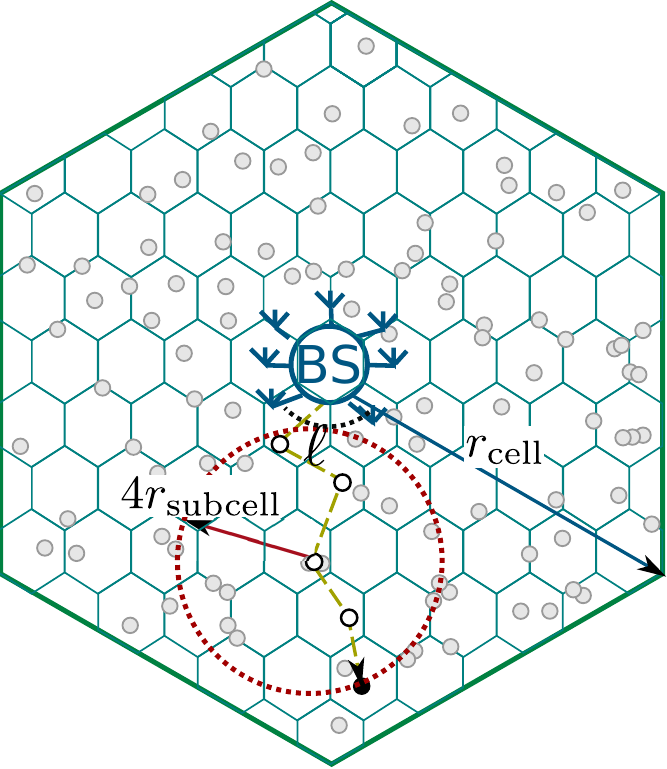}
  \label{fig:modelIMH}
 }
 \subfigure[IRH: Multi-hop is performed by RNs across microcells during the interconnection phase.]{
  \centering
  \includegraphics[width=0.75\columnwidth]{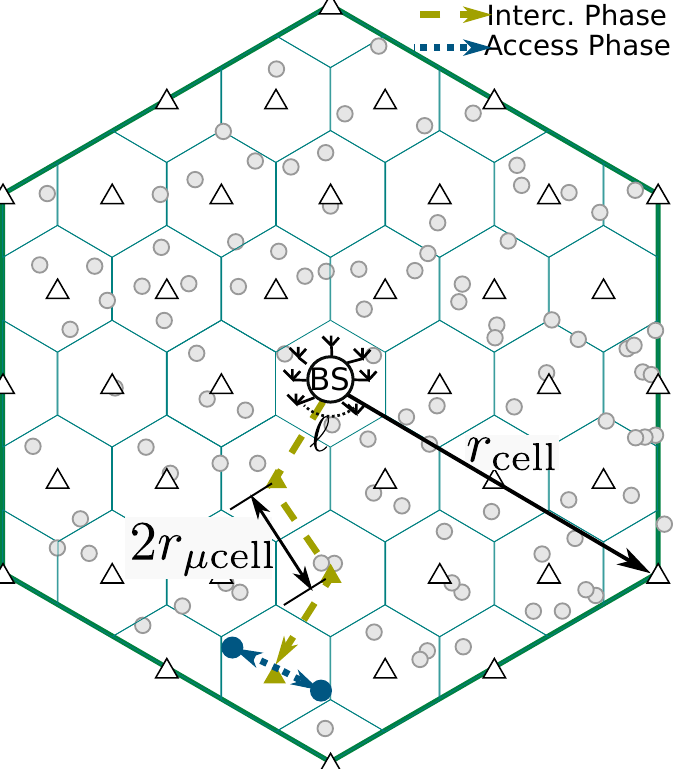}
  \label{fig:modelIRH}
  }
  \caption{The two multi-hop protocols in this paper, IMH and IRH.}

\end{figure}

\subsection{Infrastructure Relay-multi-Hop (IRH)}
In the IRH protocol, the network area is divided regularly in a nested double hexagonal grid of $m$ cells and $m+k$ microcells, where nodes in each microcell are served by an Access Point (AP) that is either a BS or a RN. We consider that a controller may decide to use the RNs or not, falling back to the behavior of ISH if the RNs are not sufficiently dense. In downlink, the RNs are exploited if $\rho\geq\beta+\gamma+(\beta-\nu)\frac{\alpha}{2}-\psi$. In uplink, the RNs are exploited if $\rho$ is high enough such that $\min\left(\psi-(\beta+\gamma-\rho)^+,(\rho-\nu)\frac{\alpha}{2}\right)\geq\min\left(\psi,(\beta-\nu)\frac{\alpha}{2}+1-\beta\right)$. These thresholds are justified by the analysis of IRH in the next section.

When the above conditions are satisfied and the RNs are utilized, a wireless backhauling connection for all $k/m$ RNs in each cell is provided by their closest BS using IMH. To implement backhauling, time is divided into an \textit{access phase} and a \textit{interconnection phase}, with relative durations $\tau_\mathrm{a}\in[0,1]$ and $1-\tau_\mathrm{a}$.

\begin{itemize}
 \item In the \textit{access phase}, for a fraction $\tau_\mathrm{a}\in[0,1]$ of the time, in each microcell, all APs exchange data with the user nodes using an ISH protocol. Signals that propagate between different microcells are treated as interference. There are $\frac{n}{m+k}$ nodes within each microcell with high probability. Unlike BSs, RNs do not have $\ell$ antennas  and therefore rates in RN microcells create a bottleneck for throughput scaling. APs use FDMA/FDM with a single antenna (no MU-MIMO), allocating transmissions to each user node on orthogonal subchannels with bandwidth $W\frac{m+k}{n}$. A BS transmits with node power allocation $P_{\mathrm{BS}}\frac{m+k}{n}$ and a RN does the same split $P_{\mathrm{RN}}\frac{m+k}{n}$. \remembertext{RC14}{Note that the capacity scaling is by definition the rate scaling of the worst user. Since microcells where the AP is a single antenna RN are more constrained, we can assume for simplicity that BSs also have a single antenna so that all microcells are represented equally in the analysis.} Note also that even if $\rho=\beta$, the numbers $m$ and $k$ may differ by a constant factor and some users in the system would be served by single-antenna APs. 

 \item In the \textit{interconnection phase}, for a fraction $1-\tau_\mathrm{a}$ of the time, BSs exchange data with RNs using an IMH protocol. Each microcell of area $A_{\mathrm{r}}\sim n^{\nu-\rho}$ becomes the \textit{routing subcell} of IMH, and information is forwarded to/from the BS via multi-hop communication using the single RN in each microcell as a relay as shown in Fig \ref{fig:modelIRH}. The BS uses MU-MIMO to transmit or receive up to $\min(\ell,\frac{k}{m})$ routing paths at the same time. Note that unlike IMH protocol, we are no longer guaranteed to have more RNs than transmit antennas. Each hop covers a distance of exactly two microcell radii, $2r_{\mu\mathrm{cell}}$, as RNs are regularly placed at the centers of their microcells. Microcells alternate in becoming active using a non-scaling (i.e. constant) time or frequency division scheduling to avoid collisions and satisfy the half-duplex constraint.
\end{itemize}

Note that the access phase and interconnection phase may have different scalings, which determines the optimal time allocation $\tau_a$ and the overall rate scaling of the IRH protocol.

We assume that RNs have one or a fixed number of antennas that do not scale with $n$. This is a realistic model since, in the near future, it is likely that BSs will still have many more antennas than nodes or RNs. However, it is not difficult to extend the results in this paper to the case where the number of RN antennas also scale with n.

\section{Capacity Scaling and Rate Scaling of Protocols}
\label{sec:achievabilityresults}

\subsection{IMH Achieves Downlink Capacity Scaling}

Our main result is the characterization of the scaling of \textit{throughput capacity} for cellular wireless networks, which is limited by the upper bound on Section \ref{sec:upperbound} and as we show below achieved by the IMH protocol. 

\begin{theorem}
 \label{the:IMH}
 For the IMH protocol, downlink rate per node scales as
 \begin{equation}
  R^{\mathrm{IMH}}_{\mathrm{DL}}(n)=
    \Theta\left(n^{\beta+\gamma-1+\min\left(\psi,(1-\nu)\frac{\alpha}{2}\right)}\right)
 \end{equation}
 and uplink rate per node scales as
 \begin{equation}
  R^{\mathrm{IMH}}_{\mathrm{UL}}(n)=\Theta(n^{\beta+\gamma-1+\min\left(\psi,(1-\nu)\frac{\alpha}{2}\right)})
 \end{equation}
\end{theorem}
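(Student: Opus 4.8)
The plan is to decompose every IMH route into the two kinds of links it uses --- the MU-MIMO \emph{head} link joining a BS to its $\ell$ closest nodes, and the single-antenna \emph{relay} links between adjacent routing subcells --- to show that each achieves a per-stream rate $\Theta(n^{\min(\psi,(1-\nu)\frac{\alpha}{2})})$, and then to balance the head capacity against the relay congestion to recover the target exponent. I would begin by pinning down the geometry: since every routing subcell must hold a node with high probability and the node density scales as $n^{1-\nu}$, the subcell area is $A_{\mathrm{r}}=\Theta(n^{\nu-1})$ (up to the logarithmic factor we ignore), so each hop --- including the head --- spans distance $\Theta(n^{\frac{\nu-1}{2}})$ and experiences path gain $d^{-\alpha}=\Theta(n^{(1-\nu)\frac{\alpha}{2}})$.

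Next I would compute the per-link rate. A relay link sees $\mathrm{SNR}=\frac{P\,d^{-\alpha}}{WN_0}=\Theta(n^{(1-\nu)\frac{\alpha}{2}-\psi})$, so $W\log(1+\mathrm{SNR})$ is $\Theta(W)=\Theta(n^{\psi})$ in the bandwidth-limited regime $\psi<(1-\nu)\frac{\alpha}{2}$ and $\Theta(\frac{P d^{-\alpha}}{N_0})=\Theta(n^{(1-\nu)\frac{\alpha}{2}})$ in the power-limited regime; hence a relay carries $\Theta(n^{\min(\psi,(1-\nu)\frac{\alpha}{2})})$, with the constant $1/7$ TDMA fraction leaving the exponent unchanged. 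For the head the key point is that a full-rank, well-conditioned effective channel has singular values $\lambda_i^2=\Theta(\ell)$ (consistent with $\tr\{\Hb\Hb^H\}=\ell^2$ over $\ell$ modes), so the equal $P_{\mathrm{BS}}/\ell$ power split is exactly offset by the array gain and each of the $\ell$ streams sees the same SNR as a full-power link; summing the $\ell$ streams gives a head throughput $T_{\mathrm{head}}=\Theta(n^{\gamma+\min(\psi,(1-\nu)\frac{\alpha}{2})})$.

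I would then identify the bottleneck. Since the head must serve all $\frac{n}{m}$ routes of a cell, dividing $T_{\mathrm{head}}$ among them yields per-node rate $\frac{m}{n}T_{\mathrm{head}}=\Theta(n^{\beta+\gamma-1+\min(\psi,(1-\nu)\frac{\alpha}{2})})$, which is the claimed exponent and matches Theorem~\ref{the:UB}. It remains to check that the relays sustain this rate: each of the $\ell$ first-hop nodes forwards $\frac{n}{m\ell}$ routes over a single antenna at rate $\Theta(n^{\min(\psi,(1-\nu)\frac{\alpha}{2})})$, which imposes exactly $\frac{n}{m\ell}R_{\mathrm{DL}}=\Theta(n^{\min(\psi,(1-\nu)\frac{\alpha}{2})})$ --- the same exponent --- and a counting argument on concentric rings around the BS shows that the number of routes crossing radius $\rho$ falls while the number of subcells on the ring grows with $\rho$, so the per-subcell load is maximal at the innermost ring and every outer relay is strictly less loaded. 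Uplink follows by reciprocity, with the head acting as a MIMO multiple-access channel in which the BS resolves $\ell$ streams and the first-hop nodes aggregate traffic toward the BS, producing the identical exponent.

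The main obstacle is this congestion step: proving rigorously that the per-subcell traffic load peaks at the innermost ring, so that the multi-hop relay network never introduces a smaller intermediate bottleneck than the head, and verifying that the first-hop relays and the head impose exactly matching constraints. A secondary technical point is justifying $\lambda_i^2=\Theta(\ell)$ uniformly in $i$ --- i.e.\ that the quarter-wavelength separation and far-field assumptions make the effective MU-MIMO channel well-conditioned enough for all $\ell$ spatial streams to carry comparable rate, which is needed because the achievable-rate definition must guarantee the stated rate to \emph{every} node.
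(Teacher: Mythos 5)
Your decomposition --- head MU-MIMO link versus single-antenna relay links, both over hops of length $\Theta(n^{\frac{\nu-1}{2}})$, with the head's $\frac{m\ell}{n}$ time-sharing factor propagated down every route --- is exactly the paper's (Appendix B splits the rate into $R^{\mathrm{IMH}(1)}$ and $R^{\mathrm{IMH}(2)}$ and shows both scale identically), and your congestion check that the first-hop relays impose the same constraint as the head is a worthwhile addition that the paper leaves largely implicit. But there are two concrete gaps. First, you compute the per-link SNR against thermal noise only, $\frac{Pd^{-\alpha}}{WN_0}$, and dismiss interference with the remark that the $1/7$ reuse factor ``leaves the exponent unchanged.'' The reuse factor only separates simultaneous transmitters; it does not by itself bound the \emph{aggregate} out-of-subcell interference, which comes from $\Theta(n)$ concurrent transmitters network-wide. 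The paper's proof computes the interference PSD $N_{\mathrm{I}}^{\mathrm{IMH}}=\Theta\bigl(n^{((1-\nu)\frac{\alpha}{2}-\psi)^+}\bigr)$ via a concentric-ring sum (the analogue of \eqref{eq:concentricsum}), which converges only because $\alpha>2$; this shows the aggregate interference is a constant multiple of the power arriving from a single adjacent subcell, so the SINR is $\Theta(1)$ in the bandwidth-limited regime and each link still delivers $\Theta(W)$. Without this step the bandwidth-limited branch of your per-link rate is unsupported: if the ring sum diverged, the SINR would vanish and the relay links could not sustain $\Theta(n^{\psi})$.

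Second, your head-link argument rests on $\lambda_i^2=\Theta(\ell)$ uniformly over all $\ell$ singular values, which you correctly flag as unproven; full rank together with the trace identity $\sum_i\lambda_i^2=\ell^2$ only controls the sum and is compatible with some $\lambda_i\to 0$. The paper never proves well-conditioning. It sidesteps the eigendecomposition entirely by exhibiting an explicit linear (matched-filter) precoder $\uu_{t,r}=\h_{t,r}/\sqrt{\ell}$, for which the desired-signal gain is exactly $|\h_{t,r}^H\uu_{t,r}|^2=\ell$, and then treating the resulting MU-MIMO self-interference $I_1$ as additional Gaussian noise whose variance is of the same order as the desired signal --- again costing only a constant factor in SINR and hence nothing in the scaling exponent. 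Replacing your singular-value step with this explicit achievable scheme closes the gap; as written, your proposal establishes the stated exponent only under an unverified conditioning hypothesis.
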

\begin{proof}
 Appendices \ref{sec:ISH} and \ref{sec:IMH}.
\end{proof}

Combining upper bound in Theorem 1 and the achievable scaling in Thorem 2, we obtain the following.

\begin{theorem}
 \label{the:cap}
The downlink per node throughout capacity scales as
\begin{equation}
 C_{\mathrm{DL}}(n)= \Theta\left(n^{\beta+\gamma-1+\min\left(\psi,(1-\nu)\frac{\alpha}{2}\right)}\right)
\end{equation}
and for $\beta+\gamma=1$, the uplink per node throughut capacity scales as
\begin{equation}
 C_{\mathrm{DL}}(n)= \Theta\left(n^{\min\left(\psi,(1-\nu)\frac{\alpha}{2}\right)}\right)
\end{equation}
\end{theorem}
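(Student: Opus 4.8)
The plan is to prove Theorem \ref{the:cap} as a short sandwich argument, since it merely assembles the two preceding results and requires no new computation. First I would recall that, by the definitions of feasible rate and throughput capacity scaling in Section \ref{sec:network}, the capacity is the supremum over all admissible protocols of the achievable per-node rate. Consequently, the rate achieved by any single concrete protocol is automatically a lower bound on capacity; in particular, the IMH protocol supplies $C_{\mathrm{DL}}(n)\geq R^{\mathrm{IMH}}_{\mathrm{DL}}(n)$ and $C_{\mathrm{UL}}(n)\geq R^{\mathrm{IMH}}_{\mathrm{UL}}(n)$.

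Next I would invoke Theorem \ref{the:IMH} to substitute the explicit IMH scaling into these lower bounds, and Theorem \ref{the:UB} for the matching upper bounds. For the downlink the two coincide exactly, giving
\begin{equation*}
\Theta\left(n^{\beta+\gamma-1+\min(\psi,(1-\nu)\frac{\alpha}{2})}\right)\leq C_{\mathrm{DL}}(n)\leq \Theta\left(n^{\beta+\gamma-1+\min(\psi,(1-\nu)\frac{\alpha}{2})}\right),
\end{equation*}
so the capacity exponent is pinned down and the first claim follows immediately. For the uplink, the IMH lower bound carries the extra factor $n^{\beta+\gamma-1}$ relative to the cut-set upper bound $\Theta(n^{\min(\psi,(1-\nu)\frac{\alpha}{2})})$; since the constraint $\beta+\gamma\leq 1$ always holds, these two exponents agree precisely when $\beta+\gamma=1$, and only in that slice does the sandwich close to yield $C_{\mathrm{UL}}(n)=\Theta(n^{\min(\psi,(1-\nu)\frac{\alpha}{2})})$.

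There is no real obstacle here: the analytical content lives entirely in Theorems \ref{the:UB} and \ref{the:IMH}. The only point requiring care is the asymptotic-almost-sure phrasing, so I would note that the upper bound holds with probability tending to one and the IMH rate is achieved with probability tending to one, hence both events hold simultaneously with high probability and the $\Theta$ characterization is valid in the almost-sure scaling sense of Section \ref{sec:network}. I would also flag \emph{why} the uplink statement is restricted to $\beta+\gamma=1$: away from this slice the gap $n^{\beta+\gamma-1}$ is genuine, reflecting that a single-antenna uplink source cannot harvest the receive-array gain that the cut grants to the cooperating receivers, so a tight uplink characterization for $\beta+\gamma<1$ would require either a stronger achievable scheme or a tighter cut, and is left open by the present pair of bounds.
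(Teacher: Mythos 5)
Your proposal matches the paper exactly: Theorem \ref{the:cap} is obtained there with no separate proof, simply by combining the cut-set upper bound of Theorem \ref{the:UB} with the IMH achievability of Theorem \ref{the:IMH}, noting that the downlink exponents coincide for all parameters while the uplink exponents coincide only when $\beta+\gamma=1$. Your additional remarks on the almost-sure phrasing and on the origin of the $n^{1-\beta-\gamma}$ uplink gap are consistent with the paper's Corollary 1 and surrounding discussion.
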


\begin{corollary}
IMH is optimal for downlink, that is it achieves throughput capacity scaling in downlink. For uplink, IMH achieves  rate scaling within a gap no larger than $n^{1-\beta-\gamma}$ to capacity, and is optimal for $\beta+\gamma=1$
\end{corollary}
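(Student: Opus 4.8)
The plan is to obtain both claims by directly comparing the exponents already fixed by Theorems \ref{the:UB} and \ref{the:IMH}, using that a concrete protocol rate is feasible and therefore lower-bounds capacity, while the cut-set bound upper-bounds it. For the downlink, the IMH achievable exponent from Theorem \ref{the:IMH}, namely $\beta+\gamma-1+\min(\psi,(1-\nu)\frac{\alpha}{2})$, is identical to the downlink upper-bound exponent of Theorem \ref{the:UB}. Feasibility of IMH gives $R^{\mathrm{IMH}}_{\mathrm{DL}}(n)\leq C_{\mathrm{DL}}(n)$, while Theorem \ref{the:UB} supplies the matching upper bound; the two sandwich $C_{\mathrm{DL}}(n)$ at the common order, which is exactly Theorem \ref{the:cap} for the downlink, so IMH achieves downlink capacity scaling.

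For the uplink I would form the ratio of the uplink upper bound of Theorem \ref{the:UB}, with exponent $\min(\psi,(1-\nu)\frac{\alpha}{2})$, to the uplink IMH exponent $\beta+\gamma-1+\min(\psi,(1-\nu)\frac{\alpha}{2})$ of Theorem \ref{the:IMH}. The common $\min(\psi,(1-\nu)\frac{\alpha}{2})$ terms cancel and leave exponent $1-\beta-\gamma$, which is nonnegative by the constraint $\beta+\gamma\leq1$ recorded in Table \ref{tab:exponents}. Since $R^{\mathrm{IMH}}_{\mathrm{UL}}(n)\leq C_{\mathrm{UL}}(n)\leq\Theta(n^{\min(\psi,(1-\nu)\frac{\alpha}{2})})$, the worst-case multiplicative gap obeys $C_{\mathrm{UL}}(n)/R^{\mathrm{IMH}}_{\mathrm{UL}}(n)\leq\Theta(n^{1-\beta-\gamma})$, which is the asserted bound. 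Specializing to $\beta+\gamma=1$ collapses this exponent to $n^{0}=1$, so the uplink upper and achievable exponents coincide and IMH is optimal in the uplink as well, recovering the uplink part of Theorem \ref{the:cap}.

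I expect no genuine obstacle, as the corollary is a bookkeeping consequence of the sandwich already in place. The only point requiring care is the direction of the inequalities: one must invoke feasibility of IMH to place its rate below capacity, and the cut-set bound to place capacity below the upper bound, so that the single factor $n^{1-\beta-\gamma}$ is read as a one-sided multiplicative gap — the amount by which IMH might fall short of the true uplink capacity — rather than a symmetric discrepancy between two fully known quantities, since the exact uplink capacity is pinned down only in the $\beta+\gamma=1$ case.
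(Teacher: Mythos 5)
Your proposal is correct and matches the paper's (implicit) argument exactly: the corollary is obtained by comparing the exponents of Theorem \ref{the:UB} and Theorem \ref{the:IMH}, noting that they coincide in downlink and differ by $1-\beta-\gamma\geq 0$ in uplink, which vanishes when $\beta+\gamma=1$. Your care about the one-sided direction of the gap is appropriate and consistent with the paper's statement that the uplink capacity is only pinned down for $\beta+\gamma=1$.
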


\begin{remark}
In our model we have $\beta+\gamma\leq 1$. When scattering is rich and the scaling of the number of independent transmits dimensions $\gamma$ equals the number of physical antennas the condition $\beta+\gamma=1$ corresponds to the total number of infrastructure investment scaling as the number of users. This, for example, is within the realm of ultra-dense networks \cite{Baldemair2015}. In this case IMH is optimal (in terms of throughput capacity scaling) in the uplink as well.
\end{remark}

Next, we obtain the achievable rate scalings of the other protocols introduced in Sec. \ref{sec:protocols}.

\subsection{ISH is Suboptimal}

The ISH protocol is representative of the dominant communication mode in current cellular networks, consisting of direct transmissions between BS and nodes. Our analysis shows that single-hop protocols can not fully exploit large bandwidths, and therefore cellular architectures must adopt multi-hop in future generations if large bandwidths are to be utilized optimally.

\begin{theorem}
 \label{the:ISH}
 For the ISH protocol, downlink rate per node scales as
 \begin{equation}
 \label{eq:ISHDLR}
  R^{\mathrm{ISH}}_{DL}(n)=
    \Theta\left(n^{\beta+\gamma-1+\min\left(\psi,(\beta-\nu)\frac{\alpha}{2}\right)}\right)
 \end{equation}
 and uplink rate scales as
 \begin{equation}
 \label{eq:ISHULR}
  R^{\mathrm{ISH}}_{UL}(n)=
    \Theta\left(n^{\beta+\gamma-1+\min\left(\psi,(\beta-\nu)\frac{\alpha}{2}+(1-\beta)\right)}\right)
 \end{equation}
\end{theorem}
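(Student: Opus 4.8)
The plan is to characterize the rate of the worst-positioned user in each cell, since by definition the achievable per-node rate of a protocol is the rate it can guarantee \emph{simultaneously} to every user, and under ISH the binding constraint is the user farthest from its serving BS. First I would pin down the geometric and resource-allocation quantities implied by the protocol description and Table~\ref{tab:exponents}: each cell has area $\Theta(A/m)=\Theta(n^{\nu-\beta})$ and hence circumradius $r_{\mathrm{cell}}=\Theta(n^{(\nu-\beta)/2})$, which upper bounds every intra-cell transmission distance; each MU-MIMO subchannel carries bandwidth $W_{\mathrm{g}}=W\ell m/n=\Theta(n^{\psi+\beta+\gamma-1})$; and the effective array dimension is $\ell=\Theta(n^{\gamma})$. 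The only quantity differing between the two directions is the per-stream transmit power: in downlink the BS power $P_{\mathrm{BS}}$ is split over the $n/m$ streams, giving $P_{\mathrm{stream}}=P_{\mathrm{BS}}m/n=\Theta(n^{\beta-1})$, whereas in uplink each node transmits at its own fixed power $P_{\mathrm{stream}}=P=\Theta(1)$. This single $n^{1-\beta}$ discrepancy is exactly what produces the extra $(1-\beta)$ term in the uplink exponent.

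Next I would compute the per-stream SINR of the worst user. Invoking the full-rank, well-conditioned channel guaranteed by the far-field / quarter-wavelength separation assumption of Section~\ref{sec:channelmodel}, the MU-MIMO receiver (uplink) or precoder (downlink) resolves each subchannel into $\ell$ interference-free spatial streams, so the worst stream achieves an effective SNR of order $\ell\,P_{\mathrm{stream}}\,r_{\mathrm{cell}}^{-\alpha}/(N_0W_{\mathrm{g}})$. Substituting the exponents yields $\mathrm{SNR}=\Theta(n^{(\beta-\nu)\alpha/2-\psi})$ in downlink and $\mathrm{SNR}=\Theta(n^{(\beta-\nu)\alpha/2+(1-\beta)-\psi})$ in uplink. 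Feeding this into $R=W_{\mathrm{g}}\log(1+\mathrm{SINR})$ splits the problem into the same two regimes as the upper-bound proof: when the SNR exponent is positive the link is degrees-of-freedom limited and $R=\Theta(W_{\mathrm{g}})=\Theta(n^{\psi+\beta+\gamma-1})$ (ignoring logarithmic factors, as elsewhere in the paper); when it is negative the link is power limited and $R=\Theta(W_{\mathrm{g}}\,\mathrm{SNR})=\Theta(\ell\,P_{\mathrm{stream}}\,r_{\mathrm{cell}}^{-\alpha})$, which evaluates to $\Theta(n^{\beta+\gamma-1+(\beta-\nu)\alpha/2})$ in downlink and $\Theta(n^{\beta+\gamma-1+(\beta-\nu)\alpha/2+(1-\beta)})$ in uplink. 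Collapsing the two cases into a single $\min(\cdot,\cdot)$ reproduces \eqref{eq:ISHDLR} and \eqref{eq:ISHULR}.

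The main obstacle is showing that the inter-cell interference, which ISH treats as noise, never changes this exponent, together with justifying the $\Theta(\ell)$ effective gain of the MU-MIMO decomposition. For interference I would sum the contributions of co-channel transmitters in the surrounding cells; since successive tiers of cells sit at distances $\Omega(r_{\mathrm{cell}})$ growing with the tier index, the per-tier contribution decays fast enough that the aggregate is $\Theta(\ell\,P_{\mathrm{stream}}\,r_{\mathrm{cell}}^{-\alpha})$ for $\alpha>2$ (contributing only an ignorable logarithmic factor at $\alpha=2$). The key point is that this has the \emph{same order} as the worst user's own signal, so the signal-to-interference ratio is $\Theta(1)$ while the interference-to-noise ratio equals the SNR computed above. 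Hence interference dominates noise precisely in the degrees-of-freedom regime, leaving $\log(1+\Theta(1))=\Theta(1)$ and rate $\Theta(W_{\mathrm{g}})$, and is dominated by noise in the power-limited regime, leaving $\mathrm{SINR}=\Theta(\mathrm{SNR})$; in neither regime is the exponent affected. The conditioning of the multi-user channel is controlled by the same geometric separation assumption, which keeps the smallest singular value of the normalized channel bounded away from zero so that the weakest of the $\ell$ streams retains a $\Theta(1)$ share of the array gain. Finally, tightness (the $\Theta$, not merely $\Omega$) follows because a user at distance $\Theta(r_{\mathrm{cell}})$ from its BS exists with probability approaching one, so the guaranteed per-node rate cannot exceed the stated order.
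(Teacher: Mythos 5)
Your proposal is correct and follows essentially the same route as the paper's Appendix A: bound the worst user's distance by $r_{\mathrm{cell}}=\Theta(n^{(\nu-\beta)/2})$, give each MU-MIMO group bandwidth $W\ell m/n$, split into bandwidth- and power-limited regimes at $\psi=(\beta-\nu)\frac{\alpha}{2}$ (shifted by $1-\beta$ in uplink because the transmit-power budget per cell scales as $n/m$), and show the tiered inter-cell interference sum converges to $\Theta(r_{\mathrm{cell}}^{-\alpha})$ for $\alpha>2$ via a zeta-function-type bound, so that interference only matters where the link is already degrees-of-freedom limited. The one cosmetic difference is that the paper uses matched-filter precoding $\uu=\h/\sqrt{\ell}$ and explicitly carries the $\Theta(1)$-SIR residual self-interference $I_1$ through the $\log(1+\cdot)$, rather than assuming a well-conditioned zero-forcing decomposition, which spares it from having to bound the smallest singular value of the multi-user channel away from zero (a claim stronger than the full-rank assumption the model actually provides).
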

\begin{proof}
 Appendix \ref{sec:ISH}.
\end{proof}
%
\begin{remark}
For $\beta=1$ ISH has the same rate scaling as IMH in all regimes and is optimal. In downlink, for all $\beta<1$ we have $(\beta-\nu)\frac{\alpha}{2}<(1-\nu)\frac{\alpha}{2}$ and ISH achieves a rate scaling worse than IMH when the bandwidth scaling is $\psi\geq(\beta-\nu)\frac{\alpha}{2}$. Similarly, in uplink, for all $\beta<1$ we have $(\beta-\nu)\frac{\alpha}{2}+(1-\beta)<(1-\nu)\frac{\alpha}{2}$ and ISH performs worse than IMH for $\psi\geq(\beta-\nu)\frac{\alpha}{2}+(1-\beta)$.
\end{remark}


\subsection{IRH Performance Depends Critically on RN Density}

There may be practical issues related with multi-hop implementation through mobile users as in IMH, and the use of static dedicated RNs in this protocol provides a reasonable middle-ground. The gap between IRH and IMH gets smaller and can be closed as RN density increases.

\begin{theorem}
 \label{the:IRH}
 For the IRH protocol, if $\rho\geq\beta+\gamma+(\beta-\nu)\frac{\alpha}{2}-\psi$, downlink rate per node scales as
\begin{equation}
 R^{\mathrm{IRH}}_{\mathrm{DL}}(n)= \Theta\left(n^{\min(\beta+\gamma,\rho)-1+\min\left(\psi,(\rho-\nu)\frac{\alpha}{2}\right)}\right)
\end{equation}
 and if $\min\left(\psi-(\beta+\gamma-\rho)^+,(\rho-\nu)\frac{\alpha}{2}\right)\geq\min\left(\psi,(\beta-\nu)\frac{\alpha}{2}+1-\beta\right)$, uplink rate per node scales as
\begin{equation}
 R^{\mathrm{IRH}}_{\mathrm{UL}}(n)= \Theta\left(n^{\min(\beta+\gamma,\rho)-1+\min\left(\psi,(\rho-\nu)\frac{\alpha}{2}+(\beta+\gamma-\rho)^+\right)}\right),
\end{equation}
 otherwise rates scale as in ISH.
\end{theorem}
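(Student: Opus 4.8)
The plan is to treat IRH as a time-shared concatenation of the two protocols already analyzed: inside each cell the \emph{access phase} is an ISH system run over the $m+k=\Theta(n^{\rho})$ microcells, while the \emph{interconnection phase} is an IMH system run on the backhaul network whose ``nodes'' are the $k$ RNs. First I would establish the reduction that the end-to-end per-node rate is $R^{\mathrm{IRH}}=\Theta(\min(R_{\mathrm{acc}},R_{\mathrm{bh}}))$, where $R_{\mathrm{acc}}$ is the per-user access rate (active a fraction $\tau_{\mathrm{a}}$ of the time) and $R_{\mathrm{bh}}$ is the per-user-equivalent backhaul rate delivered to each RN (active a fraction $1-\tau_{\mathrm{a}}$). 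Since a user is served only during access while its RN must be fed during interconnection, flow conservation forces $R\le\tau_{\mathrm{a}}R_{\mathrm{acc}}$ and $R\le(1-\tau_{\mathrm{a}})R_{\mathrm{bh}}$; optimizing $\tau_{\mathrm{a}}$ yields the harmonic mean, which is $\Theta(\min(R_{\mathrm{acc}},R_{\mathrm{bh}}))$. The prefactor $\min(\beta+\gamma,\rho)$ in the downlink exponent will then emerge as the crossover between an access-limited regime (RNs sparse, $\rho\le\beta+\gamma$, so $R_{\mathrm{acc}}$ binds) and a backhaul-limited regime ($\rho>\beta+\gamma$, so $R_{\mathrm{bh}}$ binds), and the ``otherwise'' clause corresponds to the controller reverting to ISH when neither phase beats a direct link.

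Second, I would dispatch the access phase by invoking the ISH machinery of Theorem~\ref{the:ISH} with the cell density $\beta$ replaced by the microcell density $\rho$. By the worst-user simplification that treats every AP as single-antenna, the binding microcells are those served by RNs, so each AP uses FDMA over subchannels of width $W\tfrac{m+k}{n}=\Theta(n^{\psi+\rho-1})$ at distance $\Theta(r_{\mu\mathrm{cell}})=\Theta(n^{(\nu-\rho)/2})$. The received-SNR scaling then reproduces the ISH exponents with $\beta\to\rho$: a downlink access rate with distance term $(\rho-\nu)\tfrac{\alpha}{2}$, and an uplink access rate in which the single-user power concentration over the narrow subchannel adds a $(1-\rho)$ term exactly as $(1-\beta)$ appears in ISH uplink. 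I expect this step to be a relabeled copy of the ISH proof and therefore routine; note in particular that the uplink access term $(\rho-\nu)\tfrac{\alpha}{2}+(1-\rho)$ is never the binding one, so uplink IRH is governed entirely by backhaul.

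Third, and most delicately, I would analyze the interconnection phase as the IMH backhaul of Theorem~\ref{the:IMH} on the RN network: each microcell of area $A_{\mathrm{r}}\sim n^{\nu-\rho}$ is a routing subcell, hops span $2r_{\mu\mathrm{cell}}$, there are $k/m=\Theta(n^{\rho-\beta})$ routes per cell, and the head multiplexes $\min(\ell,k/m)$ of them at once. The per-cell backhaul is the minimum of (i) the MU-MIMO head throughput, where the $\ell=\Theta(n^{\gamma})$ BS antennas supply array gain, and (ii) the forwarding capacity of the single-antenna RN-to-RN relay mesh, which enjoys \emph{no} such gain. In downlink the BS splits its power across the $\min(\ell,k/m)$ streams; tracking this split shows the head throughput scales as $\Theta(n^{\gamma+\min(\psi,(\rho-\nu)\alpha/2)})$ per cell when $\rho\ge\beta+\gamma$, and dividing by the $n/m$ users per cell gives the claimed $n^{\beta+\gamma-1+\min(\psi,(\rho-\nu)\alpha/2)}$; while for $\rho<\beta+\gamma$ the head (whose distance term carries an extra $\beta+\gamma-\rho$ because of the power-split/array-gain interplay) is strictly more capable than the access phase, so $R_{\mathrm{acc}}$ binds and the expression with $\min(\beta+\gamma,\rho)=\rho$ results. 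I would also verify that the relay mesh keeps pace by a standard routing argument: spreading the $k/m$ routes over the $\min(\ell,k/m)$ gateways makes the link just outside each gateway carry $\Theta((k/m)/\ell)$ routes, and the per-hop rate $\Theta(n^{\min(\psi,(\rho-\nu)\alpha/2)})$ matches this load exactly.

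Finally I would assemble the min, confirm the ISH fall-back by the direct exponent comparison that $R^{\mathrm{IRH}}\ge R^{\mathrm{ISH}}$ is equivalent to $\rho\ge\beta+\gamma+(\beta-\nu)\tfrac{\alpha}{2}-\psi$ (and its stated uplink analogue), and confront the uplink. \emph{The hard part is the uplink interconnection in the surplus-antenna regime} $\rho<\beta+\gamma$, where the BS has more receive dimensions than RN streams. Here the target distance term carries the extra $(\beta+\gamma-\rho)^{+}$, which is precisely the surplus receive array gain $\gamma-(\rho-\beta)$ harvestable on the final RN-to-BS hop (in uplink each RN transmits at full power, so no power split offsets the array gain, unlike downlink). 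The obstacle is that this gain is available \emph{only} on hops terminating at the BS and not on the intervening RN-to-RN hops, so a naive all-RNs-direct single-hop scheme (distance $r_{\mathrm{cell}}$) and a naive few-gateway multi-hop scheme each fall short of $(\beta+\gamma-\rho)^{+}$; realizing it requires a joint optimization of the number of gateways, the gateway radius, and the hop count that trades reduced per-hop distance against relay-mesh aggregation and the $\min(\ell,k/m)$ multiplexing cap. I would carry this out with a cut-set/flow-conservation argument around the BS, analogous to the IMH routing analysis, and I anticipate this optimization—rather than any single SNR calculation—to be the crux of the proof.
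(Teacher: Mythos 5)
Your overall architecture is the paper's: time-share an access phase (ISH run over the $m+k=\Theta(n^{\rho})$ microcells with single-antenna APs, i.e.\ the ISH exponents with $\beta\to\rho$ and $\gamma\to 0$) against an interconnection phase (IMH run on the RN network, i.e.\ Theorem~\ref{the:IMH} with the exponents rescaled as $\beta'=\beta/\rho$, $\gamma'=\gamma/\rho$, $\psi'=\psi/\rho$, $\nu'=\nu/\rho$ and the result multiplied by $k/n$ to get a per-user rate), take the harmonic-mean-optimal $\tau$ so the end-to-end rate is $\Theta(\min)$, and compare with ISH to get the fall-back thresholds. Your downlink assembly is correct and matches the paper.

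The uplink, however, contains a genuine error and a phantom difficulty. You assert that the uplink access term $(\rho-\nu)\frac{\alpha}{2}+(1-\rho)$ ``is never the binding one, so uplink IRH is governed entirely by backhaul.'' This is false: the access-phase uplink per-user rate scales as $n^{\rho-1+\min(\psi,(\rho-\nu)\frac{\alpha}{2}+1-\rho)}$ while the backhaul per-user rate scales as $n^{\beta+\gamma-1+\min(\psi,(\rho-\nu)\frac{\alpha}{2})}$, so for $\rho<\beta+\gamma$ and small $\psi$ the access phase is strictly the bottleneck ($\rho-1+\psi<\beta+\gamma-1+\psi$); that is precisely where the prefactor $\min(\beta+\gamma,\rho)=\rho$ in the stated uplink formula comes from, and your derivation as described would instead produce $\beta+\gamma-1+\min(\cdot)$ throughout. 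Relatedly, the $(\beta+\gamma-\rho)^{+}$ inside the uplink $\min$ is not a surplus receive-array gain to be harvested on the final RN-to-BS hop via a joint optimization of gateways, gateway radius and hop count --- no such optimization appears or is needed. The interconnection phase is literally Theorem~\ref{the:IMH} applied to the RN network (whose uplink and downlink exponents coincide), and the term $(\rho-\nu)\frac{\alpha}{2}+(\beta+\gamma-\rho)^{+}$ is an algebraic artifact of rewriting the $\min$ of the two phase rates with the common prefactor $\min(\beta+\gamma,\rho)-1$: when the backhaul cap $\beta+\gamma-1+(\rho-\nu)\frac{\alpha}{2}$ is expressed relative to $\rho-1$ it picks up exactly $\beta+\gamma-\rho$. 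So the step you flag as the crux dissolves once the two phase rates are written down and minimized correctly; the step you wave through (which phase binds in uplink) is where your argument actually breaks.
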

\begin{proof}
 Appendix \ref{sec:IRH}.
\end{proof}

\begin{remark}
The IRH controller always uses the RNs in downlink if $\rho>\beta+\gamma$, and in uplink if $\rho\geq\beta+\max(\gamma,(1-\beta)\frac{2}{\alpha})$.
\end{remark}
\begin{remark}
If $\rho=1$ the rate scaling with IRH matches the rate scaling with IMH and therefore IRH is optimal in downlink. If $\rho=\beta+\gamma=1$ IRH is optimal in uplink as well.
\end{remark}
\begin{remark}
However if $\rho=1$ and $\beta+\gamma<1$ IRH does not meet the uplink upper bound rate scaling. This means that the amount of wired-backhauled infrastructure is a limiting factor even for networks with very high density of wireless-backhauled infrastructure. Finally, if $\rho<1$, IRH rate scaling is dominated by that of IMH.
\end{remark}

\subsection{Illustration of the results}

\begin{figure}[!t]
\centering
\subfigure[Downlink]{
 \centering
 \raisebox{.5in}{\includegraphics[width=\columnwidth]{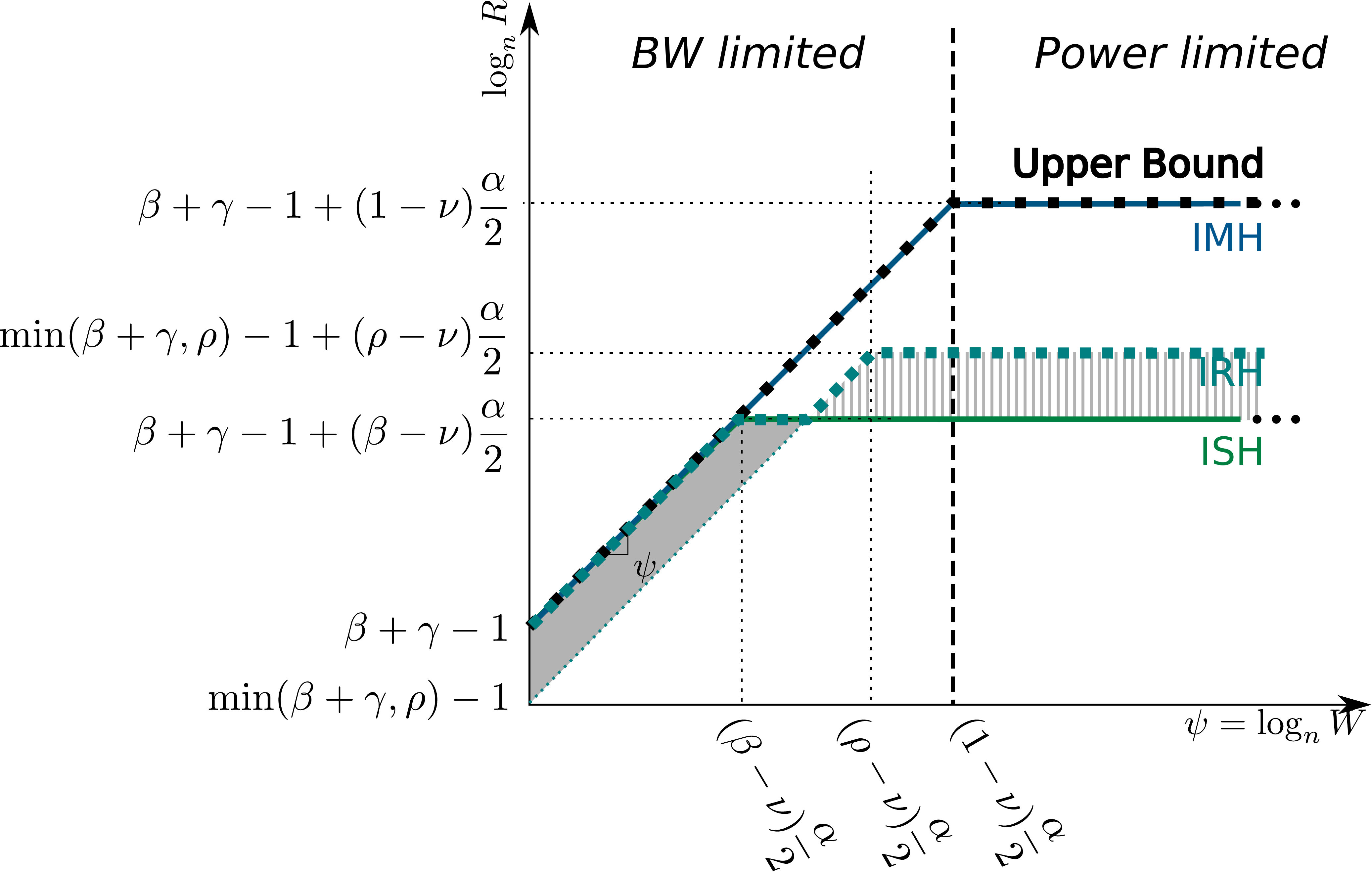}}
 \label{fig:epsilonDL}
}

\subfigure[Uplink]{
 \centering
 \includegraphics[width=\columnwidth]{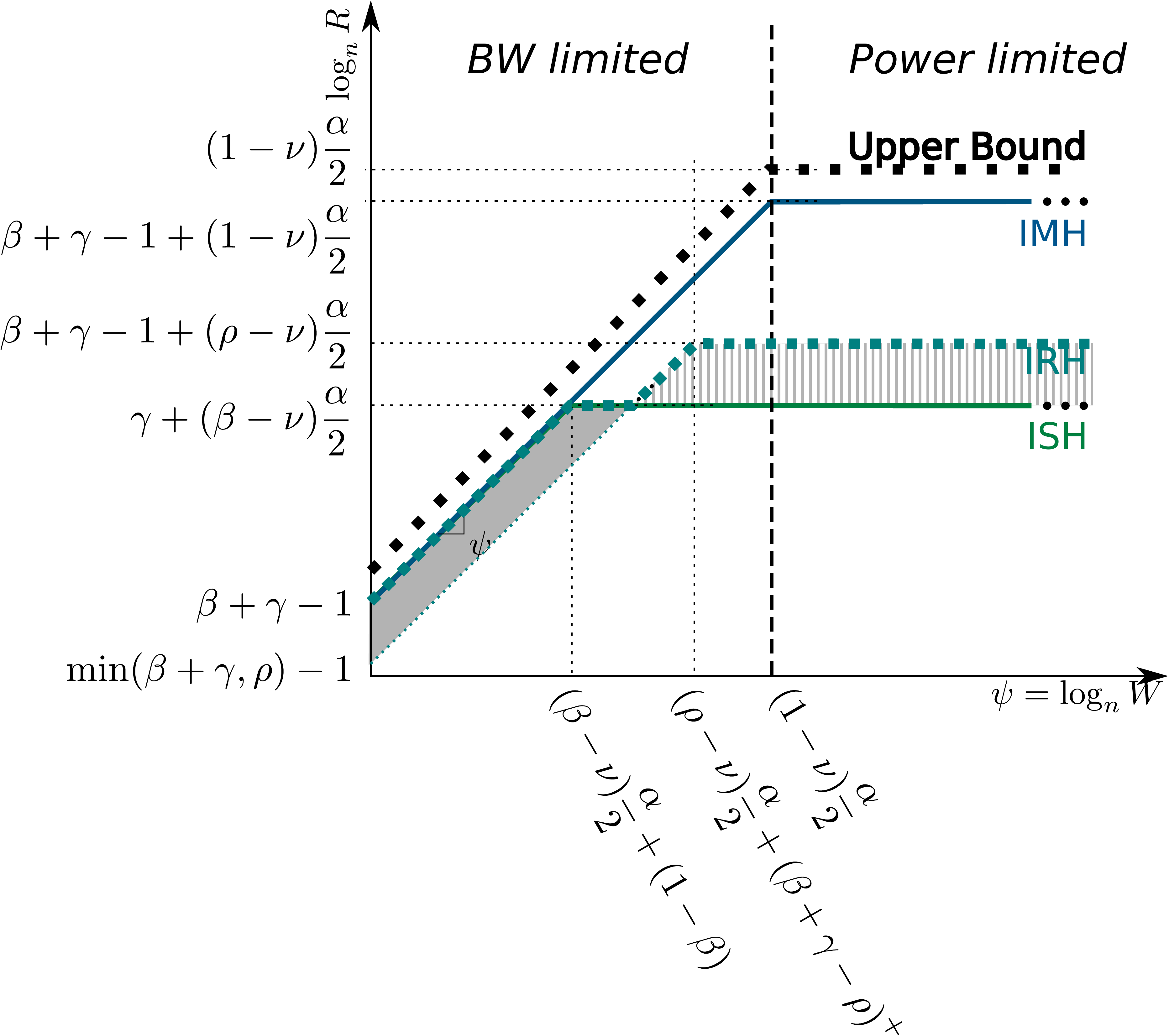} 
 \label{fig:epsilonUL}
}
 \caption{Scaling exponents for the upper bound on capacity and feasible rate with ISH, IRH and IMH. Donwlink capacity is fully characterized, whereas uplink capacity lies in the small gap between the upper bound and the IMH protocol.}
 \label{fig:epsilon}
\end{figure}
\remembertext{redundants}{
Figures \ref{fig:epsilonDL} and \ref{fig:epsilonUL} illustrate the scaling exponents of the upper bound, IMH, ISH and IRH protocols, for the downlink and uplink cases respectively. The horizontal axes represent the exponent of bandwidth, $\psi$, which together with the exponent of number of independent array dimensions, $\gamma$, represents the scaling of the degrees of freedom of BS transmission. The vertical axes show the exponent of the feasible per node rate $\log(R(n))$. 

In  downlink, the upper bound behaves exactly like IMH, and hence the capacity scaling of the cellular network is fully characterized and IMH is optimal, whereas in uplink IMH cannot achieve the upper bound if $\gamma+\beta< 1$, with the gap between IMH and the upper bound scaling as $n^{1-\gamma-\beta}$. \remembertext{RCShinAnt3}{These results generalize the particular case from  \cite{journals/tit/ShinJDVCLT11} where the bandwidh does not scale ($\psi=0$) and the antenna array at the BS follows a physical model with $\tilde{\ell}=n^{\tilde{\gamma}}$ physical antennas that experience a constraint on the number of independent transmit dimensions given by $\gamma=\min(\tilde{\gamma},\frac{1-\beta}{2})$.}

In ISH transmissions have to cover longer distances, from the BS to the cell edge, resulting in an earlier transition into a power-limited regime and a lower utility of increasing the bandwidth compared to IMH. In UL this is partially compensated by the fact that the BS receives the total power transmitted by $n/m$ nodes. However, since for a $\beta\leq1$, we have $(\beta-\nu)\frac{\alpha}{2}+(1-\beta)\leq(1-\nu)\frac{\alpha}{2}$, and ISH rate scaling is dominated by that of IMH.}

The analysis of IRH in Theorem \ref{the:IRH} shows that both in downlink and uplink we can identify a minimum density of RNs such that, beyond this density, IRH outperforms ISH. 
\begin{itemize}
\item A first scenario where the IRH controller \textbf{must not} use the RNs is identified in the switching conditions on Theorem \ref{the:IRH}. We have highlighted the region where RNs must not be used with a solid gray shadowed area in the figures. This gap region exists only if $\rho\leq\beta+\gamma$, whereas if $\rho>\beta+\gamma$ the rate scaling with bandwidth exponent $\psi\leq(\beta-\nu)\frac{\alpha}{2}$ is unchanged regardless of whether the controller use the RNs or not.
\item A second scenario where the IRH controller \textbf{must} use the RNs is identified by comparing the rates of IRH and ISH at $\psi>(\beta-\nu)\frac{\alpha}{2}+(\beta+\gamma-\rho)^+$. We have highlighted this gap with a striped gray shadowed area in the figures. The received power defines a bottleneck when $\psi>(\beta-\nu)\frac{\alpha}{2}+(\beta+\gamma-\rho)^+$ in downlink and $\psi>(\beta-\nu)\frac{\alpha}{2}+(1+\beta)+(\beta+\gamma-\rho)^+$ in uplink. RNs introduce new bandwidth-limited and power-limited areas to the rate exponents, and allow IRH to outperform ISH.
  In downlink this gap always exists for any $\rho\geq\beta$, and RNs always increase power-limited rates, because the receiver power depends only on the distance and pathloss exponents, which are improved by RNs to  $(\rho-\nu)\frac{\alpha}{2}>(\beta-\nu)\frac{\alpha}{2}$. 
  In uplink, however, if $\rho$ is too low, the striped gray region denoted in the figure collapses and RNs do not improve the power-limited rates of ISH. This occurs because the power bottleneck depends also on the number of uplink transmitters per receiver, and RNs only improve the received power if $(\rho-\nu)\frac{\alpha}{2}+(\beta+\gamma-\rho)^+>(\beta-\nu)\frac{\alpha}{2}+(1-\beta)$.
\end{itemize}

\section{Discussion and Observations}
\label{sec:observations}
\subsection{The limitations of ISH are fundamental to any single-hop protocols}

\remembertext{RC16}{The ISH protocol defined in Section \ref{sec:protocolsISH} assumes equal power allocation in downlink and suboptimal linear MU-MIMO processing. It is not difficult to show that a more general version of single hop would not improve the rate scaling beyond what was obtained in Theorem 4. Uplink and downlink rates in a general cellular network restricted to single-hop communications can be upper bounded by considering broadcast and multiple access channel results \cite{tse2005book}, respectively. For example, our downlink analysis, which is asymptotic in number of nodes, can be identified as an asymptotic high-SNR broadcast channel when $\psi\leq(\nu-\beta)\frac{\alpha}{2}$. The degrees of freedom region of the broadcast channel in the high-SNR regime is known (see for example \cite{tse2005book}) and the worst user performance does not exceed --in terms of scaling exponent-- what we achieve with ISH. Similar arguments can be made regarding the low-SNR capacity of the broadcast channel with $\psi\geq(\nu-\beta)\frac{\alpha}{2}$, and the multiple access channel. As a result, no other single-hop protocol throughput scales better than that of ISH, and the differences between ISH and IMH in our achievable schemes arise from the differences between the single-hop and multi-hop architectures, not from our simplifications in ISH. }

\subsection{Operating regimes of large cellular networks}

\begin{figure*}[!t]
 \centering
\subfigure[Network bandwidth-limited regime type I.]{
 \includegraphics[width=.315\textwidth]{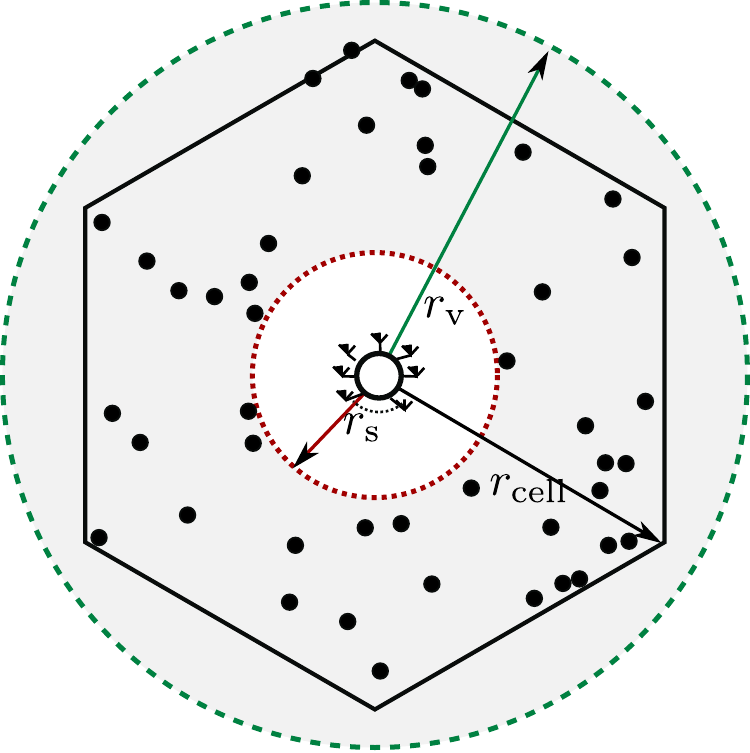}
 \label{fig:regime1}
 }
 \hspace{.031in}
\subfigure[Network bandwidth-limited regime type II.]{
 \includegraphics[width=.245\textwidth]{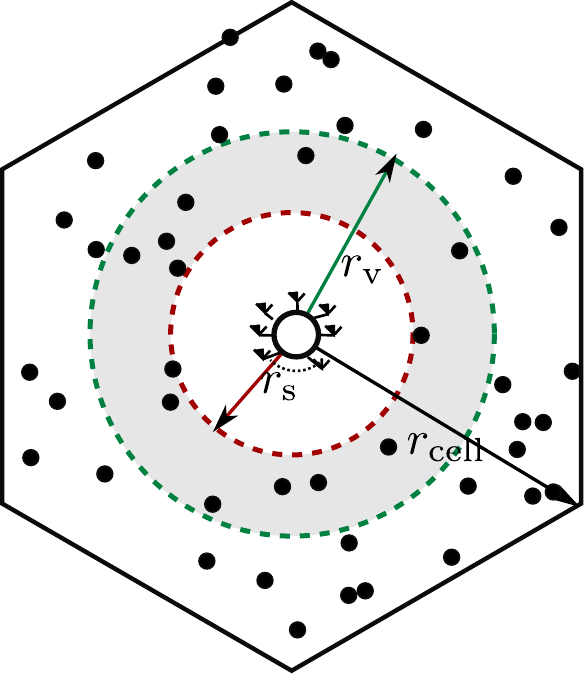}
 \label{fig:regime2}
 }
 \hspace{.3in}
\subfigure[Network power-limited regime.]{
 \includegraphics[width=.245\textwidth]{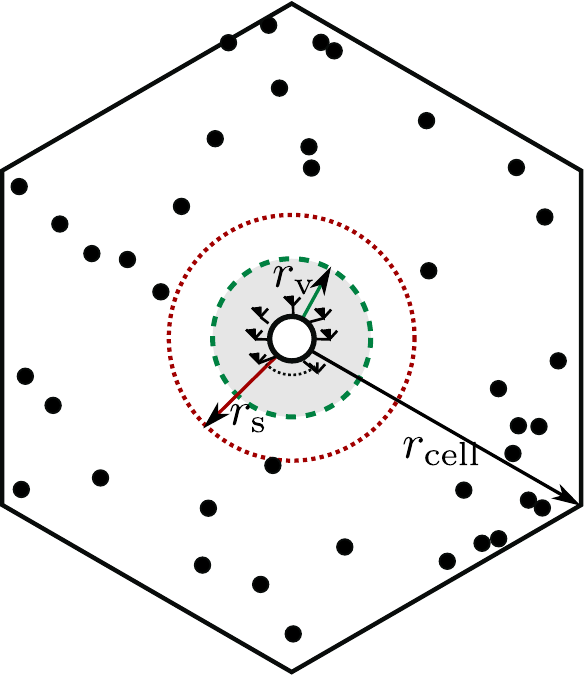}
 \label{fig:regime3}
 }
 \caption{Capacity scaling regimes. In the \textit{network bandwidth-limited regime type I} all nodes lie sufficiently close to the BS to receive direct high-SNR transmissions. In the \textit{network bandwidth-limited regime type II} capacity is bandwidth-limited, but most nodes lie far from the BS and cooperation among nodes is imperative. In the \textit{network power-limited} regime capacity is power-limited.}
 \label{fig:regimes}
 \vspace{-.2in}
\end{figure*}

For point to point channels operating with power $P$, bandwidth $W$, it is well known that there are two operating regimes: When $\frac{P}{WN_0}\ll1$, the capacity, given by $C(W)=W\log(1+\frac{P}{WN_0})$. behaves as $\Theta(\frac{P}{N_0})$, and we say that it is \textit{power limited}. Conversely, when $\frac{P}{WN_0}\gg1$, the capacity behaves as $\Theta(W\log(\frac{P}{N_0}))$, and we say it is \textit{bandwidth limited}.

Our analysis shows that large cellular networks also have two capacity scaling regimes: {\em Network bandwidth limited regime} and the {\em network power limited regime.} Furthermore, the network bandwidth limited regime can be categorized into two types depending on whether cooperation among nodes is necessary or not to ensure that network power is not a limitation. 

We illustrate these regimes in figure \ref{fig:regimes}. The regimes apply to both downlink and uplink, but we describe them only for downlink for the sake of compactness. We denote the cell radius by $r_{\mathrm{cell}}=\Theta(n^{\frac{\nu-\beta}{2}})$, the distance between two closest nodes by $r_{\mathrm{s}}=\Theta(n^{\frac{\nu-1}{2}})$, and the longest distance where the BS can transmit or receive without being power limited by $r_{\mathrm{v}}=(W)^{-\frac{1}{\alpha}}$.

\textit{Network bandwidth-limited regime type I:} If $\psi<(\beta-\nu)\frac{\alpha}{2}$, $r_{\mathrm{v}}$ scales faster than $r_\mathrm{cell}$, as $n\to\infty$, and it is possible to deliver bandwidth-limited rates $\Theta(n^{\beta+\gamma-1+\psi})$ separately to each node in the cell using single hop protocols such as ISH. In this regime there is no requirement for cooperation.

\textit{Network bandwidth-limited regime type II:} For $(\beta-\nu)\frac{\alpha}{2}<\psi\leq(1-\nu)\frac{\alpha}{2}$ network capacity is still bandwidth limited, but single-hop protocols are not. The radius $r_{\mathrm{v}}$ scales slower than $r_\mathrm{cell}$ but faster than $r_{\mathrm{s}}$. A few nodes in each cell are sufficiently close to their BS to establish high-SNR direct communications, while a majority of nodes are further away at low-SNR distances. Therefore, in this regime multi-hop is imperative to achieve the bandwidth-limited capacity scaling $\Theta(n^{\beta+\gamma-1+\psi})$.
  
\textit{Network power-limited regime:} If $\psi>(1-\nu)\frac{\alpha}{2}$,  $r_{\mathrm{cell}}$ grows faster than $r_{\mathrm{v}} $ and we cannot guarantee that there is at least one user sufficiently close to the BS with high probability. In this regime the SNR in expression \eqref{eq:rate4t} is low and the upper bound to capacity is power-limited ($\Theta(n^{\beta+\gamma-1+(1-\nu)\frac{\alpha}{2}})$).

If $\beta=1$ ISH is optimal, the type II bandwidth-limited regime collapses and the rates of ISH and IMH scale with the same exponent. However, the constraint on the total number of infrastructure units per user $\beta+\gamma\leq 1$ means that $\beta=1$ is incompatible with the exploitation of large antenna degrees of freedom in the BSs.

\subsection{Relation to the history and future of cellular technologies}

Current cellular networks are limited in degrees-of-freedom, namely bandwidth and number antennas. Hence they operate only in the network bandwidth limited regime type I, where user nodes are individually bandwidth limited and single hop protocols are optimal. Therefore, it is not surprising that the gains obtained by the early implementations of relaying in 4G systems have been modest \cite{fgomez2014improvedrelaying}.

These early multi-hop implementations correspond to our IRH protocol with low relay density (low $\rho$). In broad terms, multi-hopping uses degrees-of-freedom in exchange of power gain, and therefore is not as advantageous in traditional bandwidth limited cellular networks such as LTE, where degrees of freedom are a precious resource \cite{fgomez2014improvedrelaying}. However, future cellular systems, such as mmWave, will have an abundance of bandwidth and most likely operate in the network power-limited regime, necessitating multi-hopping, either using network nodes or infrastructure RNs,  for increased network capacity \cite{7499308}. 

Interference studies have shown that mmWave networks operate very close to the threshold between having interference- and noise-limited links \cite{7499308} (roughly equivalent to network bandwidth- and power-limited rates). The analys in \cite{7499308} also showed that a small increase in node density makes the mmWave network transition from noise- to interference-limited. Therefore, the operating regime transitions identified in our analysis have a direct practical impact in the design of future mmWave cellular systems.

\subsection{IRH is more practical than ISH or IMH for high density networks }

Both ISH and IRH improve with large invesments in infrastructure, increasing $\beta$ or $\rho$, respectively, but in practice IRH offers advantages over ISH because there are practical limitations to the deployment of wired BSs to increase $\beta$, such as curb excavation rights and cost of fiber-optic for backhaul connection.

IRH also has practical advantages over IMH as a multi-hop implementation, due to the fact that RNs are typically static, connected to the energy grid, and owned by same network operator. Conversely, user nodes are typically mobile, battery powered and owned by customers. Thus the implementation of IMH would pose greater practical challenges such as support for  mobility in the multi-hop protocol, battery efficiency optimization, and behavioral incentives to prevent customers from rejecting cooperation.

\subsection{Hierarchical cooperation is not necessary for cellular capacity scaling}

In an ad-hoc network both direct transmission and multi-hop may be suboptimal in some regimes, and a HC protocol is employed to achieve capacity scaling \cite{ozgur2009information}. This demonstrates the utility of cooperative \textit{virtual antenna arrays}, formed through coordinated joint transmissions by single antenna nodes grouped in clusters. Our analysis of cellular networks shows instead that IMH achieves downlink capacity scaling in all regimes. Therefore node clusters forming virtual antenna arrays are not necessary to achieve capacity scaling in cellular networks.

We leave for future work a scenario in which HC might regain relevance for cellular networks, where we relax the assumption that BSs cannot exchange their messages through a backhaul connection and perform joint transmissions. For cooperative BSs, it is possible that virtual antenna arrays formed by clusters of cooperative devices become necessary at the user side.

\subsection{Cellular-specific traffic bottlenecks affect capacity scaling}

Some existing ``ad-hoc networks with infrastructure support'' scaling laws analyses \cite{Kozat2003,journals/tit/ShinJDVCLT11,wonyong2014infrastructure,Dhillon2014backhaul,dhillion2014scalability,Yoon2014} model infrastructure only as an intermediary to assist the same ad-hoc type communications. These works study rates from some user nodes to others, where the infrastructure is a mere intermediary, and when BSs are so far apart that communicating with them does more harm than good, these works ignore the infrastructure and apply ad-hoc protocols such as HC. The more realistic cellular network analyzed in this paper requires traffic to always flow through BSs even when this causes bottlenecks as illustrated in our analysis.

\section{Conclusions}
\label{sec:conclusion}

In this paper we have obtained the throughput capacity scaling of cellular wireless networks in a model comprising scaling of area, BS density, number of antennas per BS, total bandwidth, and also, optionally, number of wireless backhauled RNs. We have shown that cellular network capacity scaling exhibits a transition between network bandwidth-limited and network power-limited operating regimes as the bandwidth increases, equivalent to the well known transition in point-to-point links from bandwidth-limited to power-limited capacity. Moreover, we have shown that different protocols can experience protocol-specific suboptimal transitions into power-limited behavior earlier than (i.e. for bandwidth scaling exponent lower than) the transition experienced by the capacity scaling. The transition thresholds are fundamentally related to the typical distance between a transmitter and a receiver for each protocol, suggesting that cooperative multi-hop schemes transmitting between nearest neighbor across the minimal distance have an advantage in networks with wide bandwidths. In fact, our results show multi-hop is optimal for downlink.

Single-hop protocols deal with the longest transmission distances and transition into power limitation the earliest. This means the network bandwidth-limited capacity regime is further divided into two subtypes: type I, where bandwidth is low enough that single-hop protocols are bandwidth-limited and all users can be served \textit{independently} with bandwidth-limited rates; and type II, where the network capacity is bandwidth-limited but single hop protocols are power-limited and cooperation is imperative to serve capacity-achieving rates to all users.

In cellular networks with additional wireless-backhauling RNs the capacity scaling law depends strongly on RN density, so that if the number of RNs is insufficient the network is better off disregarding the RNs altogether and using only the BSs. Conversely, if a sufficiently dense set of RNs is installed, cooperative multi-hop capacity can be achieved. However, a low density of wired-backhauling infrastructure can still be a limitation even if the RN density is high.

Our analysis provides a theoretical framework to explain historical experiences with the implementation of multi-hop and relaying in cellular networks, where the gains have been very modest. Current cellular systems operate in the network bandwidth-limited type I regime, where our analysis predicts that adding RNs brings little advantage and may even decrease rate scaling.

Our analysis is also highly relevant for the design of future cellular networks with increased bandwidth, such as mmWave or carrier aggregation systems. Preliminary studies in mmWave have shown highly parameter-sensitive transitions from bandwidth-limited to power-limited behavior, suggesting potentials for multi-hop communications.

\appendices
\section{Proof of Theorem \ref{the:ISH}}
\label{sec:ISH}
We start with the analysis of ISH, which also forms the foundation of the IMH and IRH protocols. We describe the downlink proof in detail, whereas the uplink proof follows by minor changes. In MU-MIMO downlink BS $t$ assigns to each user node $r$ in the same subchannel a signature unitary vector $\uu_{t,r}$ and transmits $\x=\sum_{r=1}^{\ell}\uu_{t,r}x_{t,r}$, satisfying the power allocation $\Ex{}{|\uu_{t,r}x_{t,r}|}=P_{\mathrm{BS}}\frac{m}{n}$ by the protocol description of ISH (Sec. \ref{sec:protocols}). Out-of-cell interfering transmissions in the same subchannel and transmissions by the BS not canceled by the linear precoder are treated as noise:
 \begin{equation}
 \label{eq:ISHchannel}
 \begin{split}
  y_{t,r}=&d_{t,r}^{-\frac{\alpha}{2}}(\h_{t,r}^H\uu_{t,r})x_{t,r}+\underset{I_1 \textnormal{ (Same BS)}}{\underbrace{\sum_{r'}d_{t,r}^{-\frac{\alpha}{2}}(\h_{t,r}^H\uu_{t,r'})x_{t,r'}}}\\
  &+\underset{I_2 \textnormal{ (Other BSs)}}{\underbrace{\sum_{(t',r')\in\mathcal{I}^{\mathrm{ISH}}_{t,r}}d_{t',r}^{-\frac{\alpha}{2}}(\h_{t',r}^H\uu_{t',r'})x_{t',r'}}}+z_r
  \end{split}
 \end{equation}
 
We represent by $I_1$ the self-interference of the signals transmitted by the same BS towards other nodes, and by $I_2$ the out-of-cell interference from the set of interferers $\mathcal{I}^{\mathrm{ISH}}_{t,r}$ consisting in all transmitter-receiver pairs $(t',r')$ allocated to the same subchannel as $(t,r)$ by some other BS $t'$. The design of the transmit vectors $\uu_{t,r}$ is studied in \cite[10.3]{tse2005book}. We apply only linear transmit precoding vectors $\uu=\h/\sqrt{\ell}$ that are optimal at low-SNR and suboptimal at high-SNR. However, since we are only interested in scaling, less power efficiency in the bandwidth-limited regime does not affect the scaling exponent. This makes the channel gain at the desired user
$$|\h_{t,r}^H\uu_{t,r}|^2=\left|\frac{\h^H\h}{\sqrt{\ell}}\right|^2=\ell$$

Due to the fact that all transmissions are uniformly allocated across the available bandwidth and spatial dimensions, and there are many interferers, the second form of interference can be approximated as white noise by the central limit theorem. Since Gaussian is the worst distribution a noise of known covariance can have \cite{cover2001noise}, we  lower bound the rate by also modeling the self-interference within the MU-MIMO scheme as Gaussian with variance $\Ex{}{|I_1|^2}$. The variance of the total MU-MIMO self-interference can be characterized as

$$\Ex{}{|I_1|^2}=\Ex{}{|\sum_{r'}d_{t,r}^{-\frac{\alpha}{2}}(\h_{t,r}^H\uu_{t,r'})x_{t,r'} |^2}=\Theta(\ell\frac{m}{n}P_{BS})d_{t,r}^{-\alpha}$$

For the out-of-cell interference, we introduce a notation that applies to all protocols. For protocol x we denote the noise plus out-of-cell interference Power Spectral Density (PSD) for transmitter $t$ and receiver $r$ as
 \begin{equation}
\label{eq:NIx}
N_\mathrm{I}^{x}\triangleq\frac{\Ex{}{|I_2+z_r|^2}}{W\frac{m}{n}\ell} =\frac{\sum_{t'\in\mathcal{I}^{x}_{t,r}}d_{t',r}^{-\alpha}P_{t'}}{W}+N_0,
\end{equation}
where $\mathcal{I}^{x}(t,r)$ denotes a set of out-of-cell interference transmitters affecting $t$. Here $P_{t'}$ denotes the total power of interferer $t'$. In our particular case of ISH downlink $\mathcal{I}(t,r)$ is the set of all BSs except $t$, and $P_{t'}=P_{BS}$.
 
Therefore the rate in the ISH downlink link $(t,r)$ can be lower bounded as
\begin{equation}
\label{eq:mutinfISH}
R_{t,r}^{\mathrm{ISH}}\geq \frac{m}{n}\ell W\log\left(1+\frac{\ell\frac{m}{n}P_{BS}d_{t,r}^{-\alpha}}{\Ex{}{|I_1|^2}+W\frac{m}{n}\ell N_{\mathrm{I}}^{\mathrm{ISH}}} \right),
\end{equation}

The achievable rate scaling must by definition be guaranteed to all the nodes in the network, so that $R^{\mathrm{ISH}}(n)=\min_{t,r}R_{t,r}^{\mathrm{ISH}}$ where each $R_{t,r}^{\mathrm{ISH}}$ is lower bounded by \eqref{eq:mutinfISH}. Also, \eqref{eq:mutinfISH} decreases monotonously with $d_{t,r}$, and the worst-case distance satisfies $\max_{t,r}d_{t,r}=\Theta(r_{\mathrm{cell}})$ with high probability. Therefore
\begin{equation}
\label{eq:mutinfISHregimes}
 R^{\mathrm{ISH}}(n)\geq\begin{cases}
                      \Theta(\frac{m}{n}\ell W)&\textbf{iff }r_{\mathrm{cell}}^{-\alpha}\gtrsim WN_{\mathrm{I}}^{\mathrm{ISH}} \\
                      \Theta(\frac{m}{n}\ell r_{\mathrm{cell}}^{-\alpha})&\textbf{iff }r_{\mathrm{cell}}^{-\alpha}\ll WN_{\mathrm{I}}^{\mathrm{ISH}}.\\
                     \end{cases}
\end{equation}

Finally, we show that the threshold of \eqref{eq:mutinfISHregimes} is equivalent to $\psi>(\beta-\nu)\frac{\alpha}{2}$, producing \eqref{eq:ISHDLR}. The ISH uplink analysis is identical except the transmitted power budget in uplink scales with $n/m$ per cell, yielding \eqref{eq:ISHULR}. 

We examine necessary and sufficient conditions for $r_{\mathrm{cell}}^{-\alpha}\ll WN_{\mathrm{I}}^{\mathrm{ISH}}$ separately. For a necessary condition for $W N_{\mathrm{I}}\ll r_{\mathrm{cell}}^{-\alpha}$, we upper bound the PSD by
\begin{equation}
 \label{eq:NIDL}
 N_{\mathrm{I}}^{\mathrm{ISH}}\leq\Theta\left(n^{\left((\beta-\nu)\frac{\alpha}{2}-\psi\right)^+}\right)
\end{equation}
so, if $\psi\leq(\beta-\nu)\frac{\alpha}{2}$, then the scaling exponent of $WN_{\mathrm{I}}^{\mathrm{ISH}}$ is always lower than or equal to $r_{\mathrm{cell}}^{-\alpha}\leq d_{t,r}^{-\alpha}$, and the rates of all users \eqref{eq:mutinfISH} are bandwidth-limited. Therefore, $\psi>(\beta-\nu)\frac{\alpha}{2}$ is necessary for the link rates to become power-limited.

To prove \eqref{eq:NIDL} we upper bound the distance sum in \eqref{eq:NIx}. In the ISH protocol we can upper bound the distance $d_{t',r}$ by the distance between the border of the cell of BS $t'$ and the cell where $r$ belongs. Considering the geometry of the hexagonal layout it can be shown that there $6k$ cells form a ring with index $k$ separated a distance greater or equal than $\frac{3}{2}k-1$ cell radii from the border of the cell of $r$. The network is finite and a maximum $k$ exists, but we can get rid of border effects and further bound the interference by extending the sum all the way to $k\rightarrow\infty$.
\begin{equation}
\label{eq:concentricsum}
\begin{split}
  \sum_{t'\in\mathcal{I}^{\mathrm{ISH}}}d_{t',r}^{-\alpha}
    &\leq r_{\mathrm{cell}}^{-\alpha}\sum_{k=1}^{\infty}(6k)(2k-1)^{-\alpha}\\
    &\leq r_{\mathrm{cell}}^{-\alpha}\sum_{k=1}^{\infty}\frac{12}{(2k-1)^{\alpha-1}}\\
    &= 12r_{\mathrm{cell}}^{-\alpha}\left(\frac{3}{2}\right)^{\alpha-1}\zeta(\alpha-1,\frac{1}{3})
\end{split}
\end{equation}
where the last equality holds for $\alpha>2$ and the result is expressed using the generalized Riemann Zeta function $\displaystyle \zeta(s)=\sum_{x=1}^{\infty}\frac{1}{x^s}$, which is a constant with regard to $n$. Combining with $W$ this shows that interference PSD scales as $n^{(\beta-\nu)\frac{\alpha}{2}-\psi}$, while noise PSD $N_0$ is constant, so \eqref{eq:NIx} for ISH downlink scales as \eqref{eq:NIDL}.

To prove the same condition is also sufficient we assume $\psi>(\beta-\nu)\frac{\alpha}{2}$. This allows to approximate the upper bound \eqref{eq:NIDL} by $\simeq N_0$, and since $N_{\mathrm{I}}^{ISH}\geq N_0$ the bound is tight. Therefore, if $\psi>(\beta-\nu)\frac{\alpha}{2}$ then $R_{t,r}^{\mathrm{ISH}}\simeq \frac{m}{n}\ell W\log\left(1+\frac{P_{BS}d_{t,r}^{-\alpha}}{W N_0} \right)$, and since $\min_{t,r}d_{t,r}^{-\alpha}=\Theta(r_{\mathrm{cell}}^{-\alpha})\leq\Theta(W)$, this leads to $R^{\mathrm{ISH}}(n)=\Theta(\frac{m}{n}\ell r_{\mathrm{cell}}^{-\alpha})$.

\section{Proof of Theorem \ref{the:IMH}}
\label{sec:IMH}

The proof of Theorem \ref{the:IMH} consists of using the arguments in Appendix \ref{sec:ISH} with minor variations. We only discuss the main differences here. Achievable rate scaling in IMH is given by the minimum of two constraints: we denote the BS-node link rates at the center of the cell by $R_{t,r}^{\mathrm{IMH}(1)}$, and node-node link rates in the rest of the multi-hop system by $R_{t,r}^{\mathrm{IMH}(2)}$. Thus
$$R^{\mathrm{IMH}}_{DL}(n)=\min_{t,r}\min_{\{1,2\}}(R_{t,r}^{\mathrm{IMH}(1)},R_{t,r}^{\mathrm{IMH}(2)}).$$

The scaling of $R_{t,r}^{\mathrm{IMH}(1)}$, the BS-node link rate, is scaling using a MU-MIMO similar to ISH, with the exception that no frequency division is required. The BS of each cell transmits $\ell$ signals towards the nearest routing subcells with bandwidth $W$ using separate spatial signatures. There are a total of $n/m$ routes that need to be served by the BS, so the route towards each destination is time multiplexed by a factor $m\ell/n$ at the BS. Thus, rate between a BS and a node in its nearest routing subcell can be expressed as
\begin{equation}
\label{eq:RIMH1}
R_{t,r}^{\mathrm{IMH(1)}}=\frac{m\ell}{n}W\log\left(1+\frac{P_t d_{t,r}^{-\alpha}\ell}{\Ex{}{|I_1|^2}+W N_{\mathrm{I}}^{\mathrm{IMH}}} \right)
\end{equation}
where the power is $P_t=P_\mathrm{BS}/\ell$ in DL and $P_t=P$ in UL.

The link rates in the rest of the node-node hops of each route are single antenna. Here, a factor of$\frac{m\ell}{n}$ is imposed on the rate because the routes inherit it from the BS. 
\begin{equation}
\label{eq:RIMH2}
R_{t,r}^{\mathrm{IMH}(2)}=\frac{m\ell}{n}W\log\left(1+\frac{Pd_{t,r}^{-\alpha}}{W N_{\mathrm{I}}^{\mathrm{IMH}}}\right)
\end{equation}

In order to determine the scaling of each term, 
the same arguments applied to the scaling of \eqref{eq:mutinfISHregimes} can then be applied to \eqref{eq:RIMH1} and \eqref{eq:RIMH2}. To study $N_\mathrm{I}^{\mathrm{IMH}}$, the noise plus out-of-subcell interference PSD for IMH using \eqref{eq:NIx}, we have to take into account that the set of interferers $\mathcal{I}^{\mathrm{IMH}}_{t,r}$ is the set of all transmitters in nearby routing subcells that transmit at the same time as the link $(t,r)$. In downlink the interferers may be either other nodes or BS, and the power of each interferer can be upper bounded by the worst case $P_{t'}\leq \max(P_\mathrm{BS},P)$, whereas in UL only nodes transmit $P_{t'}=P$.
 
In order to derive an expression similar to \eqref{eq:concentricsum} for IMH, $d_{t',r}$ is upper bounded by the distance between the border of the routing subcell that contains $t'$ and the border of the routing subcell that contains $r$. In order to guarantee a half-duplex collision-free routing, IMH uses a time-division where only 1 out of every 7 subcells in the hexagonal layout transmit at the same time. Following as in \eqref{eq:concentricsum} we can show that
 $$
  N_{\mathrm{I}}^{\mathrm{IMH}}=\Theta\left(n^{\left((1-\nu)\frac{\alpha}{2}-\psi\right)^+}\right)
 $$

Examining Table \ref{tab:exponents} both the BS to nodes and the node to node links can be shown to scale as
 \begin{equation}
 \label{eq:IMHDL1}
  R_{t,r}^{\mathrm{IMH}(1)}=\Theta(n^{\beta+\gamma-1+\min(\psi,(1-\nu)\frac{\alpha}{2})})=R_{t,r}^{\mathrm{IMH}(2)}.
 \end{equation}
where we distinguish the same two regimes as in ISH, but this time the threshold is $\psi<(1-\nu)\frac{\alpha}{2}$ for the bandwidth-limited rates, proving Theorem~\ref{the:IMH} for the downlink. 

The proof for uplink follows the same principle but the BS receives multiple transmissions at the same time, increasing the received power, and the point to point links become a bottleneck where each route is served with rate
 \begin{equation}
  R^{\mathrm{IMH}}_{UL}(n)=\Theta(n^{\beta+\gamma-1+\min(\psi,(1-\nu)\frac{\alpha}{2})})
 \end{equation}

\section{Analysis of IRH}
\label{sec:IRH}

We first assume that the IRH always makes use of the RNs, and from the resulting rates derive the scaling exponent thresholds where it is best to fall back to ISH. The IRH protocol implements RN multi-hop in two phases. The access phase is an ISH protocol with $m+k$ APs that consist of both BSs and RNs. We denote this protocol by ISH-R its achievable rate scaling
\begin{equation}
\begin{split}
 R^{\mathrm{ISH-R}}_{\mathrm{DL}}(n)&=\tau\Theta\left(n^{\rho-1+\min\left(\psi,(\rho-\nu)\frac{\alpha}{2}\right)}\right).\\
 \end{split}
\end{equation}

The interconnection phase uses an IMH protocol where the BSs are the infrastructure and the RNs are the user nodes. We denote this by IMH-R with achievable backhauling rate per RN
\begin{equation}
\begin{split}
 R^{\mathrm{IMH-R}}_{\mathrm{DL}}(k)&=(1-\tau)\Theta\left(k^{\beta'+\gamma'-1+\min\left(\psi',(1-\nu')\frac{\alpha}{2}\right)}\right)\\
 &\textnormal{ where }\beta'=\frac{\beta}{\rho},\; \gamma'=\frac{\gamma}{\rho},\; \psi'=\frac{\psi}{\rho}.
 \end{split}
\end{equation}
where we modify the exponents of area, bandwidth and number of antennas to account for the fact that IMH is evaluated with $k=n^\rho$ user nodes. Each RN serves $n/k$ nodes and must divide the backhauling rate it achieves equally among them, giving $n^{\rho-1}R^{\mathrm{IMH-R}}_{\mathrm{DL}}(n^\rho)$ per user node.  

Finally, each node in the IRH network achieves the minimum between the rate of its link to its AP, and the fraction of the AP bachkauling rate that is assigned to the node.
\begin{equation}
\begin{split}
 R^{\mathrm{IRH}}_{\mathrm{DL}}(n)=\Theta\Big(\min
 \Big(
 &(1-\tau)n^{\beta+\gamma-1+\min\left(\psi,(\rho-\nu)\frac{\alpha}{2}\right)},\\
 &\quad\tau n^{\rho-1+\min\left(\psi,(\rho-\nu)\frac{\alpha}{2}\right)}
\Big)
\Big)
 \end{split}
\end{equation}

\remembertext{RC17}{The optimal time-division in a decode-and-forward relay scheme with two links of known rates is well known to be the value that minimizes the total transmission time per bit, as verified for example in \cite{journals/tit/AzarianGS05}
. This is given by 
\begin{equation}
\begin{split}
 \tau^*&=\arg \min_{\tau} (1-\tau)\frac{1}{k/nR^{\mathrm{IMH-R}}_{\mathrm{DL}}(k)}+\tau\frac{1}{R^{\mathrm{IRH}}_{\mathrm{DL}}(n)}\\
 &=\frac{k/nR^{\mathrm{IMH-R}}_{\mathrm{DL}}(k)}{R^{\mathrm{IRH}}_{\mathrm{DL}}(n)+k/nR^{\mathrm{IMH-R}}_{\mathrm{DL}}(k)}.
 \end{split}
\end{equation}
If $R^{ISH-R}_{DL}(n)>\Theta(k/nR^{IMH-R}_{DL}(n))$, $\tau^*$ converges to zero. If $R^{ISH-R}_{DL}(n)<\Theta(k/nR^{IMH-R}_{DL}(n))$, $\tau^*$ converges to 1. If $R^{ISH-R}_{DL}(n)=\Theta(R^{IMH-R}_{DL}(n))$, $\tau^*$ does not affect the rate scaling.} Putting everything together gives the expression in Theorem~\ref{the:IRH} for downlink. By careful comparison of Theorem~\ref{the:IRH} and Theorem~\ref{the:ISH}, we deduce the threshold where it is better to use ISH is $\rho\geq\beta+\gamma+(\beta-\nu)\frac{\alpha}{2}-\psi$.

By similar arguments, the scaling of uplink IRH when the RNs are used can be shown to be
\begin{equation}
\begin{split}
 R^{\mathrm{IRH}}_{\mathrm{UL}}(n)=\Theta\Big(\min
 \Big(
 &(1-\tau)n^{\beta+\gamma-1+\min\left(\psi,(\rho-\nu)\frac{\alpha}{2}\right)},\\
 &\quad \tau n^{\rho-1+\min\left(\psi,(\rho-\nu)\frac{\alpha}{2}+1-\rho\right)}
\Big)
  \Big)\\
 \end{split}
\end{equation}
and by comparison with Theorem~\ref{the:ISH} we can deduce that IRH only benefits from RNs if $\rho$ is high enough such that $\min\left(\psi-(\beta+\gamma-\rho)^+,(\rho-\nu)\frac{\alpha}{2}\right)\geq\min\left(\psi,(\beta-\nu)\frac{\alpha}{2}+1-\beta\right)$.



\vspace{-1.7in}
\begin{IEEEbiography}
[{\includegraphics[width=1in,height=1.25in,clip,keepaspectratio]{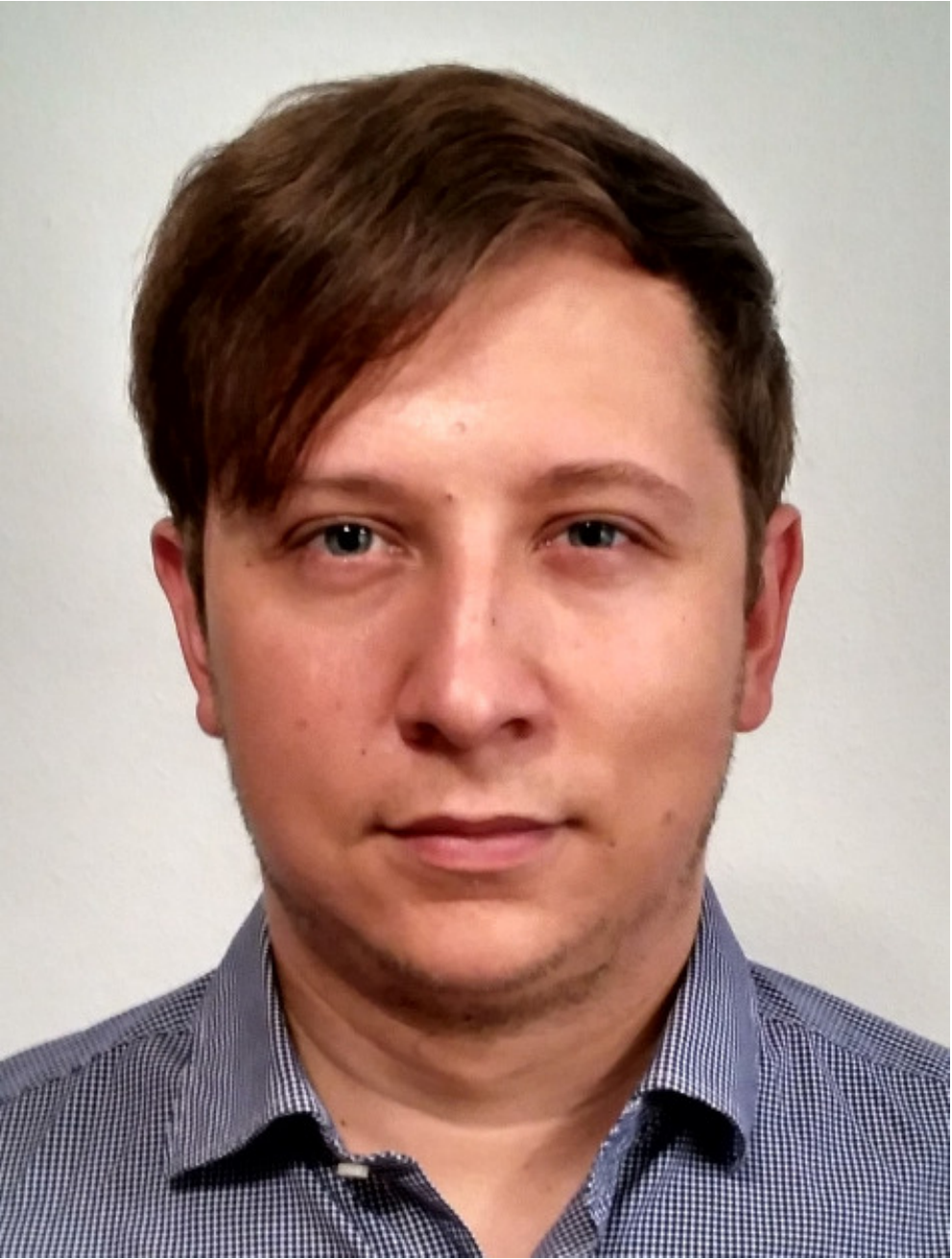}}]
{Felipe G\'omez-Cuba}
received his Ingeniero de Telecomunicaci\'on degree in 2010 (Telecommunication Engineering 5 year program), M.Sc in Signal Processing Applications for Communications in 2012, and a PhD degree in 2015 from the University of Vigo, Spain. He worked as a researcher in the Information Technologies Group (GTI), Telematic Engineering Department (DET), University of Vigo, (2010--2011), the Galician Research and Development center In Advanced Telecommunications (GRADIANT), Vigo, Spain (2011--2013). He also worked as a visiting scholar in the NYUWireless center at NYU Tandon School of Engineering (2013--2014) and completed his PhD under a FPU grant from the Spanish MINECO (2013--2016) at the University of Vigo. He has been awarded a Marie Curie Individual Fellowship - Global Fellowship with the Dipartimento d'Engegneria dell'Informazione, University of Padova, Italy, and the Department of Electrical Engineering, Stanford University, USA (2016-present).
\end{IEEEbiography}

\newpage

\begin{IEEEbiography}
[{\includegraphics[width=1in,height=1.25in,clip,keepaspectratio]{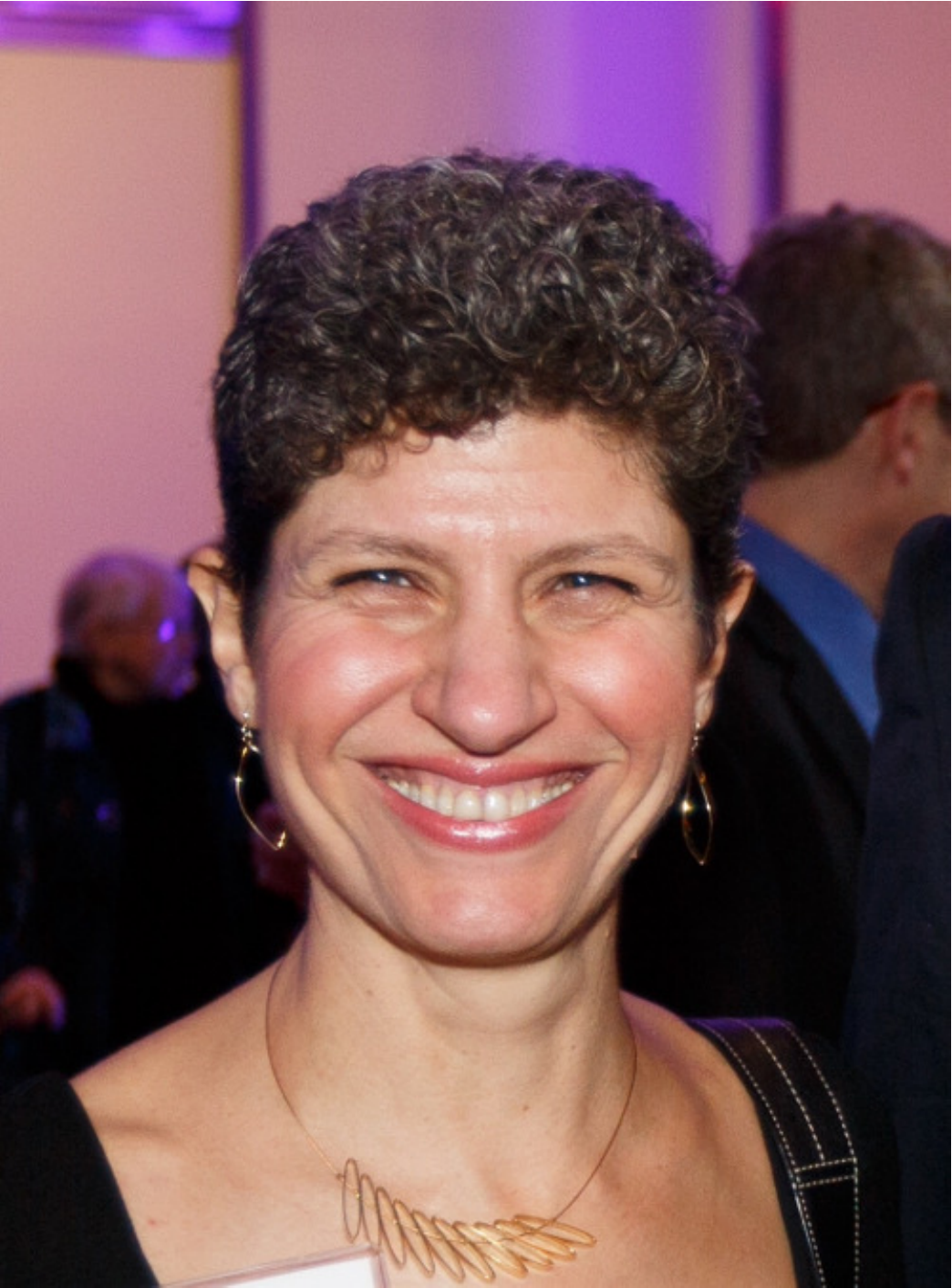}}]
{Elza Erkip}
Elza Erkip (S’93–M’96–SM’05–F’11) received the B.S. degree in Electrical and Electronics Engineering from Middle East Technical University, Ankara, Turkey, and the M.S. and Ph.D. degrees in Electrical Engineering from Stanford University, Stanford, CA, USA. Currently, she is a Professor of Electrical and Computer Engineering with New York University Tandon School of Engineering, Brooklyn, NY, USA. Her research interests are in information theory, communication theory, and wireless communications.

Dr. Erkip is a member of the Science Academy Society of Turkey and is among the 2014 and 2015 Thomson Reuters Highly Cited Researchers. She received the NSF CAREER award in 2001 and the IEEE Communications Society WICE Outstanding Achievement Award in 2016. Her paper awards include the IEEE Communications Society Stephen O. Rice Paper Prize in 2004, and the IEEE Communications Society Award for Advances in Communication in 2013. She has been a member of the Board of Governors of the IEEE Information Theory Society since 2012 where she is currently the First Vice President.   She was a Distinguished Lecturer of the IEEE Information Theory Society from 2013 to 2014. 
\end{IEEEbiography}

\vspace{-3in}

\begin{IEEEbiography}
[{\includegraphics[width=1in,height=1.25in,clip,keepaspectratio]{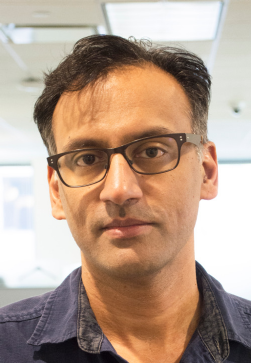}}]
{Sundeep Rangan}

Sundeep Rangan (S’94–M’98–SM’13–F’16) received the B.A.Sc. at the University of Waterloo, Canada and the M.Sc. and Ph.D. at the University of California, Berkeley, all in Electrical Engineering. He has held postdoctoral appointments at the University of Michigan, Ann Arbor and Bell Labs.

In 2000, he co-founded (with four others) Flarion Technologies, a spin off of Bell Labs, that developed Flash OFDM, one of the first cellular OFDM data systems and pre-cursor to 4G systems including LTE and WiMAX.  In 2006, Flarion was acquired by Qualcomm Technologies where Dr. Rangan was a Director of Engineering involved in OFDM infrastructure products. He joined the ECE department at NYU Tandon (formerly NYU Polytechnic) in 2010. He is a Fellow of the IEEE and Director of NYU WIRELESS, an academic-industry research center researching next-generation wireless systems.  His research interests are in wireless communications, signal processing, information theory and control theory.
\end{IEEEbiography}

\vspace{-3in}

\begin{IEEEbiography}
[{\includegraphics[width=1in,height=1.25in,clip,keepaspectratio]{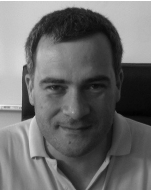}}]
{Francisco Javier Gonz\'alez-Casta\~no}
Francisco J. González-Castaño received the Ingeniero de Telecomunicación degree from University
of Santiago de Compostela, Spain, in 1990 and the Doctor Ingeniero de Telecomunicación degree from
University of Vigo, Spain, in 1998. He is currently a Catedrático de Universidad (Full Professor) with
the Telematics Engineering Department, University of Vigo, Spain, where he leads the Information Technologies Group (http://www-gti.det.uvigo.es).
He has authored over 80 papers in international journals, in the fields of telecommunications and computer science, and has participated in several
relevant national and international projects. He holds two U.S. patents.
\end{IEEEbiography}

\end{document}